\newcolumntype{P}[1]{>{\centering\arraybackslash}p{#1}}
\newtheorem{theorem}{Theorem}
\newtheorem{lemma}{Lemma}
\DeclareMathOperator*{\argmin}{arg\,min}
\providecommand{\examplename}{Example}
\begin{document}
%
\title{On the Reusability of Post-Experimental Field Data for Underwater Acoustic Communications R\&D}
%
%
%

\author{Sijung~Yang$^{1}$,~Grant~Deane$^2$,~James~C.~Preisig$^3$, ~Noyan~C.~Sev\"uktekin$^{1}$,~Jae~W.~Choi$^{1}$,~and~Andrew~C.~Singer$^{1}$,~\IEEEmembership{Fellow,~IEEE}
\thanks{$^1$Coordinated Science Laboratory, Department of Electrical and Computer Engineering, University of Illinois at Urbana-Champaign, Uraban, IL USA.}
\thanks{$^2$Scripps Institution of Oceanography, University of California at San Diego, La Jolla, CA USA.}
\thanks{$^3$JPAnalytics LLC, Falmouth, MA USA.}
\thanks{Corresponding author: acsinger@illinois.edu.}
}%

%
%

\markboth{IEEE Journal of Oceanic Engineering - for submission}%
{Yang \MakeLowercase{\textit{et al.}}: On the Reusability of Post-Experimental Field Data for Underwater Acoustic Communications R\&D}
%



\maketitle

\begin{abstract}
Field data is often expensive to collect, time-consuming to prepare to collect, and even more time-consuming to process after the experiment has concluded. However, it is often the practice that such data are used for little after the funded research activity that was concomitant with the experiment is completed. Immutability of the original experimental configuration either results in re-gathering of expensive field-data, or in absence of such data, model-dependent analysis that partially captures the real-world dynamics. For underwater acoustic research and development, the standard communication pipeline might be modified to enable greater re-usability of experimental field data. This paper first characterizes the necessary modifications to the standard communication pipeline to prepare signals for transmission and subsequent recording such that research trades for different modulation and coding schemes may be undertaken post-experiment, without the need for re-transmission of additional waveforms. Then, using the modified mathematical framework, sufficient conditions for reliable post-experimental replay of the environment are recognized. Finally, techniques are discussed to collect sufficient environmental statistics such that subsequent research can be accomplished long after the experiment has been completed, and that results from a given experiment may be reasonably compared with those of another. Examples are provided using both synthetic and experimental data collected from at-sea field tests.
\end{abstract}

\begin{IEEEkeywords}
Underwater acoustic communication, data re-usability, modulation, coding, noise, measurements, channel equalization, channel replay, dither.
\end{IEEEkeywords}

%
\IEEEpeerreviewmaketitle

\section{Introduction}
\IEEEPARstart{U}{nderwater} acoustic communication systems have been developed and tested experimentally in a wide range of environments. While it is generally accepted among researchers and experimentalists that the doubly spread nature of the underwater acoustic channel makes it one of the most challenging communications environments on the planet, it receives comparatively little attention in the research literature for a number of reasons.

One of these has to do with the cost of deploying and testing different algorithmic methodologies in real environments, and another, that goes hand-in-hand with the first, is that there is no commonly accepted channel model through which performance can be meaningfully validated. While there continues to be strong interest in the scientific and Naval communities in the technology in general, by comparison, RF-wireless and wireline communication systems, which have consumer-facing applications, garner orders-of-magnitude more funding for research and commercial development. 

It is the latter that has led to the widespread adoption of mathematically convenient, if not realistic, propagation models that enable simulation and numerical performance evaluation without the need for expensive real-world system deployment. Without readily-available performance evaluation models or compendia of commercially collected experimental data, common practice in the research community has been to make best use of experimental opportunities in real-ocean environments as much as possible; ``in the absence of good statistical models for simulation, experimental demonstration of candidate communication schemes remains a \textit{de facto} standard''
\cite{stojanovic2009underwater}.
In preparing and planning for an experiment, much expense and preparation are needed, and while great pains are often taken to ensure the fidelity of the collected data for the original goals of the experiment, comparatively little attention is paid to ensuring that subsequent developments in physical-layer algorithms, modulation and coding concepts can be explored with the collected data.  Some of the best practices that have been or can be employed in testing and experimental data collection activities are described to make such subsequent development possible. 
The Workshop Report from the UComms12 Conference \cite{potter2013ucomms} presented many of the best practices for both acoustic and environmental data collection during field experiments to facilitate the sharing and comparison of data and results across field experiments. Yet, today it is still difficult to meaningfully compare results across experiments and high quality data that enables the direct comparison of signalling techniques and processing algorithms in a common environment remains scarce. In addition, versatile data sets that enable researchers to develop and validate signals and algorithms in realistic settings continue to be scarce.

Challenges abound and stem, in part, from the apparent strengths and capabilities of data generation techniques and the requirements of different types of research.  Physics-based numerical simulators are some of the most versatile tools available for varying an experimental environment but remain the least realistic.  Field tests are by definition the most realistic technique for collecting data for experimental validation, and are not only the least versatile, but can also be prohibitively expensive. All the signals to be considered must be transmitted during the tests and it is often difficult to accurately measure all of the environmental conditions that may impact signal or algorithm performance.  Playback simulators offer a compromise approach, but are limited by the accuracy of the required channel parameterization, including the ambient noise modeling, and estimates of these quantities. As a result, they often provide overly optimistic performance predictions that do not hold up in practice.

Here, three techniques to overcome some of these challenges are presented. First, we show that post-experiment performance evaluations with different forward error correction (FEC) schemes and different symbol mappings can be enabled, without the need for conducting additional field experiments or relying on playback simulators having accurate channel estimates. This can be done by virtually inserting arbitrary dither sequences into the signal preparation stage, which only changes the meaning of the transmitted signal, and not the waveform itself. Through the use of binary dithers (at the binary sequence level), arbitrary error correction coding and coded-modulation modifications can be made post-experiment, without any assumptions on the environment or experimental conditions, so long as the transmitted symbol sequence remains unchanged. Through the use of additive dithers at the symbol constellation level, the performance of a variety of transmit modulation schemes can be explored post experiment, where the fidelity of such performance predictions rely on the linearity of the end-to-end signal path and suitably successful mitigation of intersymbol interference (ISI) through equalization that incorporates the additive dither. Also, we suggest a modification to direct playback simulation techniques by exploiting prior information of not only an estimated channel response, but also making use of the residual prediction error (RPE) obtained by subtracting the experimentally received signal from that which would be estimated using the estimated channel response. This scheme compensates for overly-optimistic performance prediction results arising from imperfect channel estimates, since RPE includes statistical characteristics of such channel estimation errors that can be injected into the channel replay output. Finally, we introduce some estimation techniques for environmental conditions during field experiments, which can be used in modeling and predicting such statistical behaviors in underwater acoustic channels.

The rest of the paper is organized as follows. In Section II, methods and requirements for constructing transmit signals that allow post-experiment testing of a wide range of signal modulation and coding techniques are presented. In Section III and IV, the proposed post-experiment testing is validated numerically, and experimentally. In Section V, a modification to direct playback simulation techniques that uses RPE is presented. Finally, in Section VI, techniques for estimating some relevant environmental conditions that existed during a field experiment from the recorded acoustic data are presented.

\section{Signal Preparation for Post-experimental Research}
\indent In this section, we explore the use of two additional elements inserted into a relatively standard signal transmission flow to enable a host of post-experiment performance evaluations by varying the use of different forward error correction (FEC) and interleaving schemes, different symbol mapping methods (both recursive for pre-coding and memoryless) and even different transmit signal constellations, all while using a single, unchanged transmit waveform, which, post-experiment, is an immutable result of experiment design choices.

\begin{figure}[h]
	\centering  
	\includegraphics[scale=0.4]{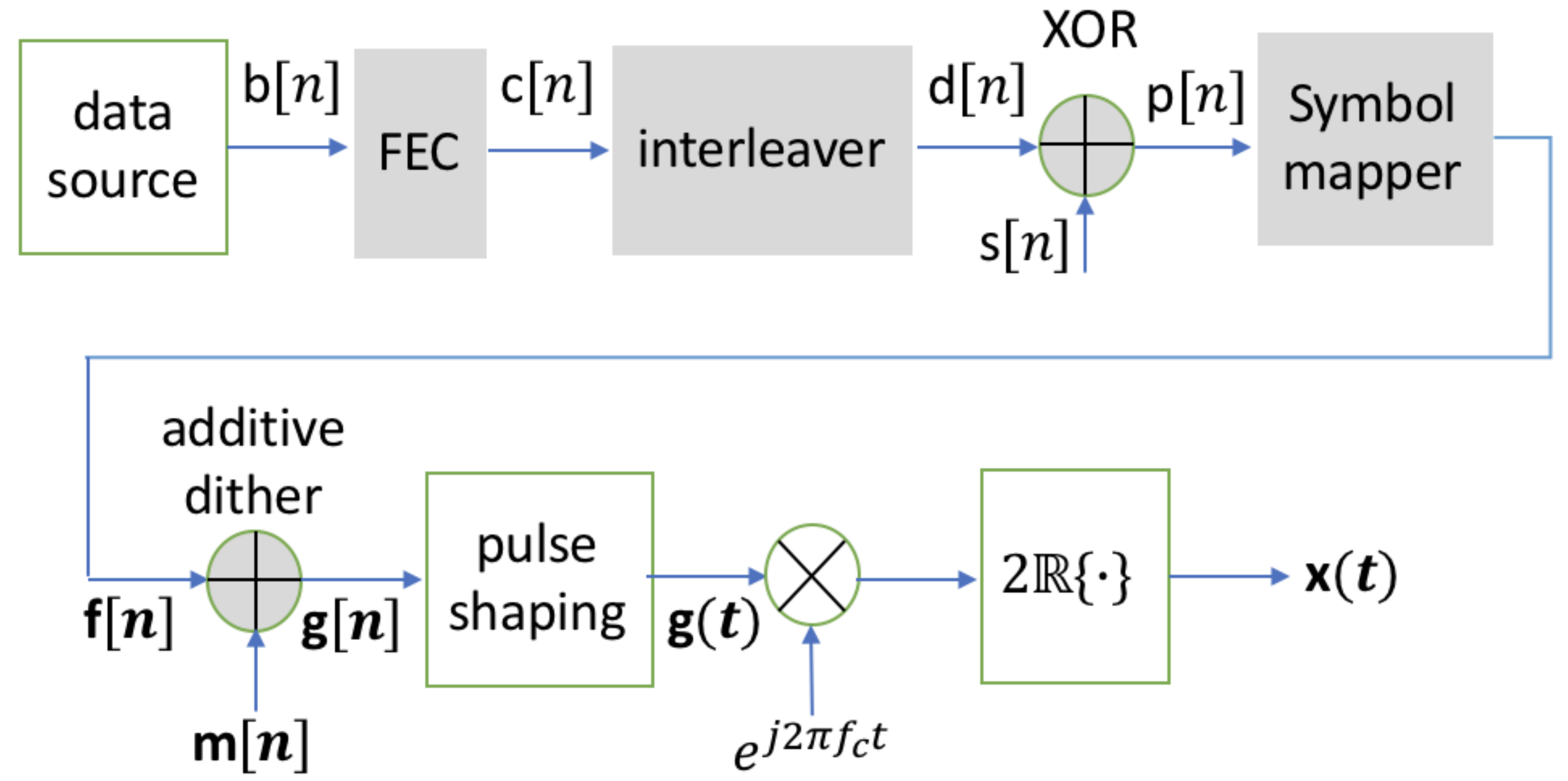}
	\caption{Generic signal flow diagram enabling post-experiment modification of forward error correction, symbol mapping, pre-coding, and constellation evaluation.}
\end{figure}

Our notation is as follows. The transmitted waveform for an experiment $x(t)$, which may be a vector waveform with M channels, i.e. $\boldsymbol{x(t)}=[x_1 (t),\cdots,x_M (t)]^T$, where $x_i (t)$, is transmitted from the $i^{th}$ array element at time $t$. This, as indicated above, is immutable, as it was physically transmitted from the single, or array, transducer during the experiment. However, in the selection of the forward error correction, we have the ability after an experiment to alter the meaning of the data stream, $p[n]$, that is taken as the output of the FEC unit and input to the symbol mapper.  This can be achieved through judicious use of a whitening sequence, $s[n]$, that is applied via XOR to the data stream, $d[n]$, that is the output of the FEC/interleaver stages of encoding.  Whitening sequences are not uncommon, in particular for applications for which, due to framing or other network-layer processing, the data $d[n]$ remains correlated or contains unwanted periodic components due to framing and overhead. This enables better channel utilization, more effective channel coding/decoding performance, and better excitation of the channel, thereby improving identifiability and equalization. 

In our case, however, the sequence $s[n]$ is applied so that after an experiment, the system developer may substitute different FEC and interleaving strategies (i.e. different code rates, different code structures, or even different space-time coding and spreading approaches, if the symbol mapper in Figure 1 produces a vector output $\boldsymbol{f}[n]$). After the experiment, if a different code rate were desired, or a different type of code were to be tested, this can be accommodated using the XOR sequence $s[n]$  as follows. First, the developer can use the same bitstream $b[n]$, to feed into an alternate encoder/interleaver creating a different sequence $d_2 [n]\neq d[n]$.  To produce the same binary sequence $p[n]=d_2[n] \oplus s[n]$, then $s[n]=d_2[n]$  "XNOR" $d[n]$, that is, $s[n]$ is a sequence that takes the value 1 when $d_2[n]=d[n]$ and $0$ when they differ. In the receive signal stream, of Figure 2, the same sequence $s[n]$ can be used to recover an estimate of the desired sequence $d[n]$ for input to the de-interleaver/decoder. A common time index n is used for sequences in Figures 1 and 2 through slight notational abuse. 

If a transmit signal constellation other than the original constellation is desired for testing, then the symbol mapper in Figure 1 can be taken as a lower-rate (or higher-rate) mapper than that which was used in the original experiment, so long as the output sequence desired for testing $\boldsymbol{f_2}[n]$ can be constructed by complex additive dither $\boldsymbol{m}[n]=g[n]-f_2 [n]$. Using this additive dither at both the transmitter and receiver (Figure 2), alternate symbol mapping schemes can be employed, including simple, lower-order, or higher-order, memoryless mappings, like QPSK or a lower-order or higher-order QAM constellation than the one used in the real experiment. When the order of the constellation differs from that of the original transmission, the rate of the transmission will change and transmission signal lengths need to be handled appropriately. This could enable post-experiment trades between use of a higher, more spectrally-efficient transmit constellation, say, using 16QAM and a lower-rate code, say, rate 1/4, and, say, QPSK with a rate 1/2 code. While each has the same net throughput, channel variability and noise conditions may favor one approach. More sophisticated mapping schemes, including trellis-coded modulation, or recursive symbol pre-coding schemes can also be employed. These are attractive for their spectral efficiency improvements \cite{daly2010linear,tuchler2011turbo} as well as their amenability to iterative equalization and decoding methods, such as turbo-equalization. An arbitrarily sophisticated symbol mapper can be employed and tested, so long as the receiver makes use of the same additive dither as the transmitter, without the need to re-transmit data in another experiment.

\begin{figure}[h]
	\centering
	\includegraphics[scale=0.4]{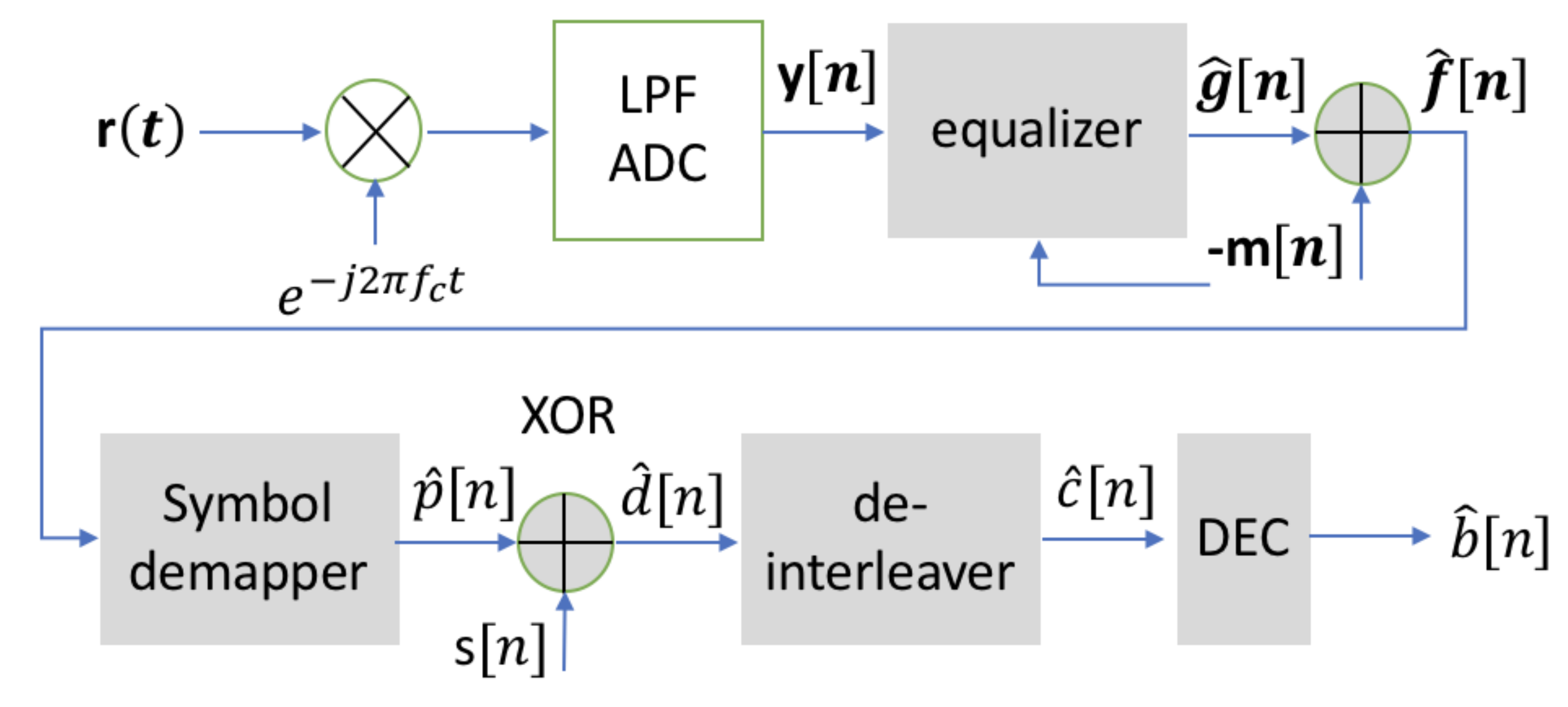}
	\caption{Generic receive signal flow diagram enabling post-experiment modification of forward error correction, symbol mapping, pre-coding, and constellation evaluation.}
\end{figure}

Shown in Figure 2 is a notional signal flow diagram for the receive processing chain that can make use of the additive dither and binary XOR operations for post-experiment testing.  As shown, the received signal could be received from an array of $K$ elements $\boldsymbol{r}(t)=[r_1 (t),\cdots,r_K (t)]^T$, bandshifted and sampled, (or, as is typically the case for acoustic communication systems, sampled at carrier, and digitally band-shifted and decimated) to yield the vector of received signals at complex baseband, $\boldsymbol{y}[n]=[y_1 [n],\cdots,y_K [n]]^T$. This use of a complex additive dither can enable symbol modulation and constellation regression to compare various trade-offs between code rate and symbol modulation order. Such methods might also be useful for testing dithers for decorrelating sources, to aid in multi-user environments and networking, as well as in adding and testing methods appropriate for clandestine operation.

To facilitate equalization methodologies that incorporate Doppler tracking and resampling, the lowpass filter and ADC block of Figure 2 should operate at a sufficient rate to accommodate any bandwidth expansion that might occur due to Doppler. If it is anticipated that substantial Doppler may be present and post-experiment algorithm development would benefit from ground-truth Doppler measurements, it may be feasible to transmit side-carriers just outside the signal band to enable sample-by-sample Doppler measurement, which could be calculated on a per-receiver, or per-beam basis for array receivers.

If other physical-layer modulation methods are employed, such as OFDM, a similar strategy to the above for QAM (single-carrier) modulation can be employed, as shown in Fig~\ref{fig:OFDM}. If the symbol-mapper in Figure 1 is interpreted as a standard symbol mapper, followed by serial-to-parallel converter, inverse DFT, parallel to serial conversion and cyclic prefix extension (or zero-padding), the same methods for both constellation and FEC experimentation can still be employed. In addition, with OFDM applications, by proper design of the FEC, either parity symbols, or known symbols can be inserted in specific OFDM carriers, enabling a host of channel estimation and Doppler correction strategies to be employed, again, without the need for re-transmission of the experimental data. Similar arrangements can be made by replacing the serial to parallel conversion in Fig.~\ref{fig:OFDM} with a “resource block mapper” that judiciously maps symbols to OFDM subcarriers. This block could either allocate parity or training data to specific carriers, or avoid certain carriers altogether for purposes of spectrum allocation. Such methods are common in both wireline (e.g. xDSL) and wireless (e.g., LTE) OFDM approaches.

\begin{figure}[h]
	\centering
	\includegraphics[scale=.3]{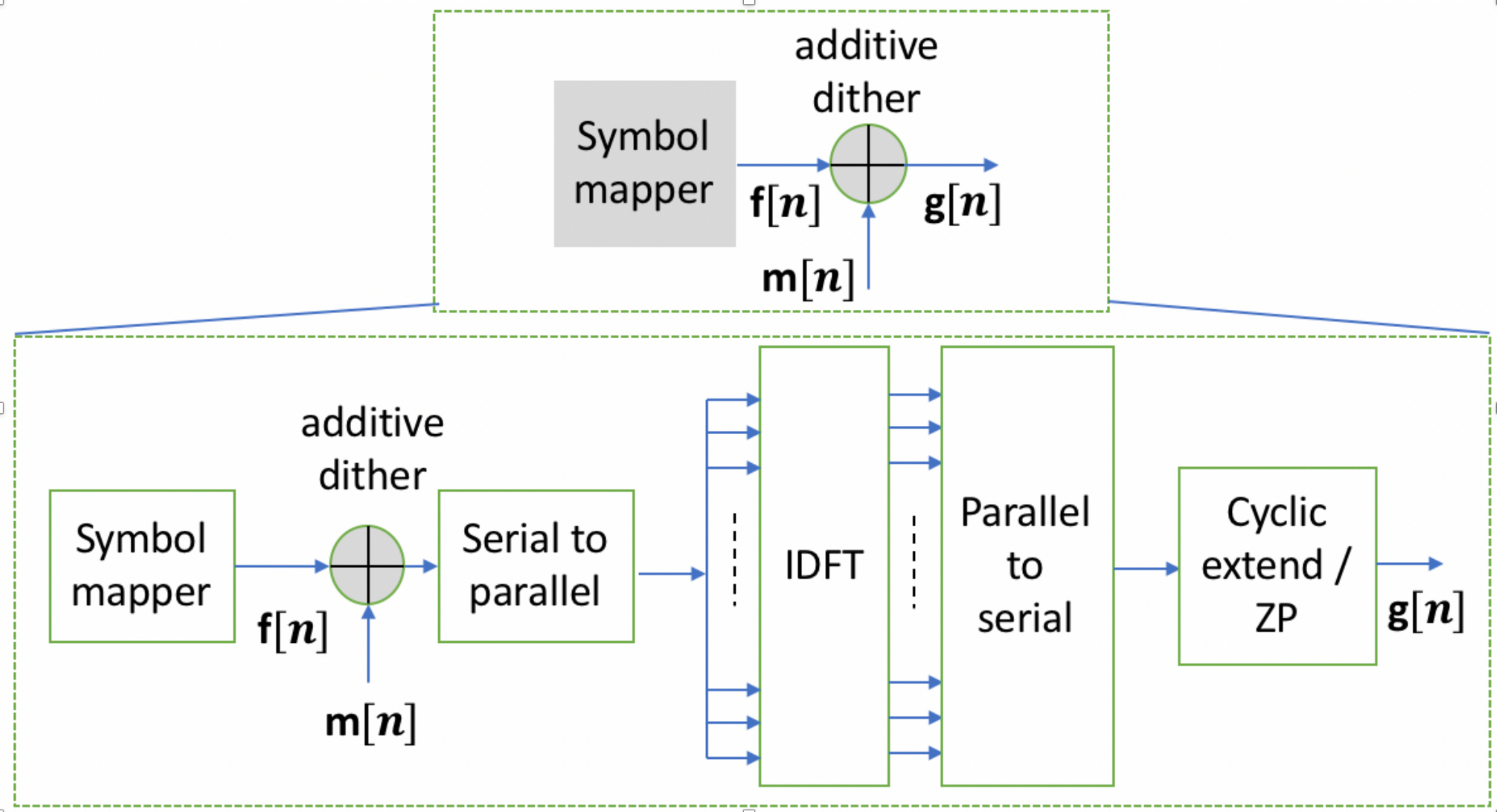}
	\caption{OFDM symbol mapper and additive dither.\label{fig:OFDM}}
\end{figure}

\subsection{Dither Sequence Design}
\begin{figure}[h]
	\centering
	\includegraphics[scale=0.5]{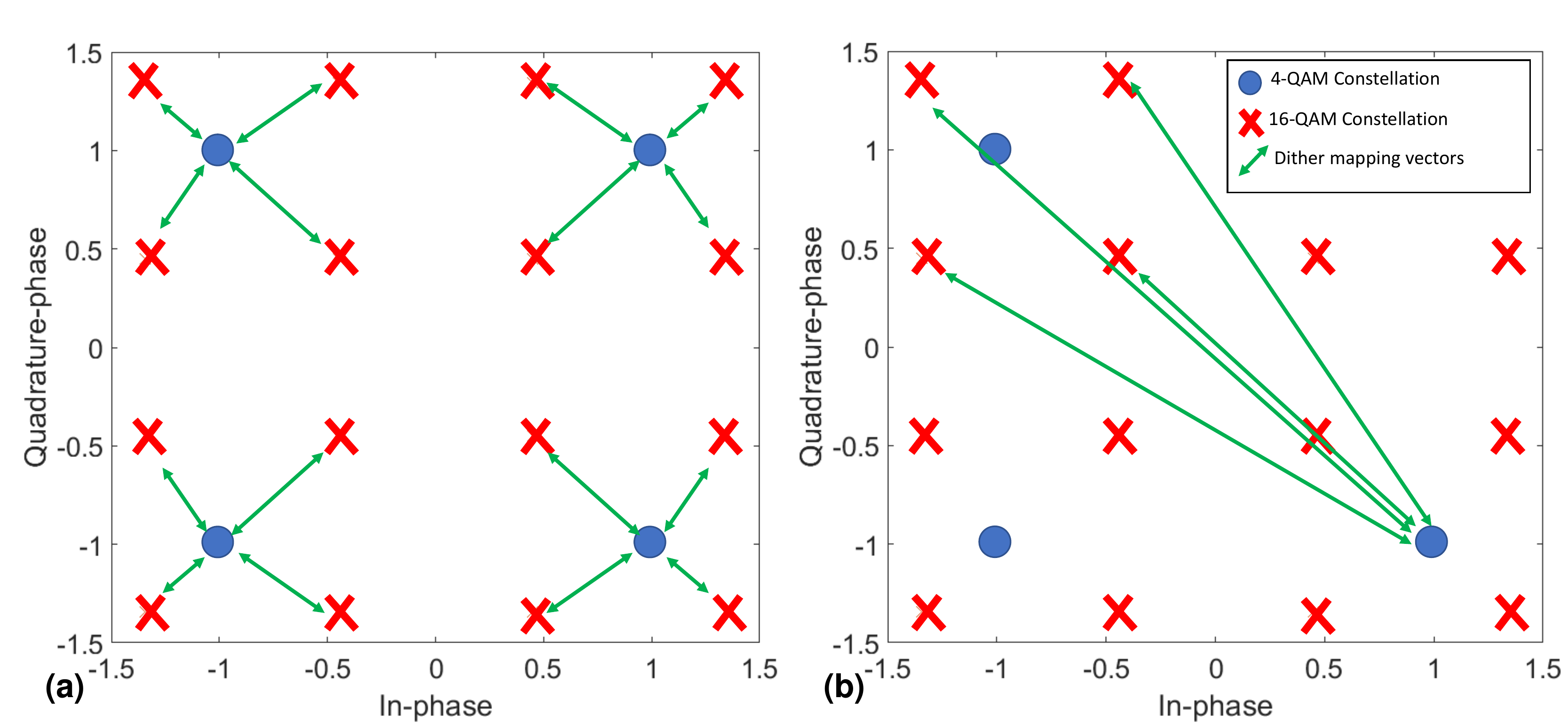}
	\caption{Different symbol mappings that correspond to different dither sequences between QPSK and 16QAM symbols with (a) minimum magnitude and (b) maximum magnitude for the symbol shown.\label{fig:dither}}
\end{figure}

In the previous section, two different elements were inserted into the signal path to enable the evaluation of communication schemes with different parameters. It is straightforward to see that the modification of the FEC through the use of an XOR with a whitening sequence will not change the result of experiment, since this is simply a re-interpretation of the meaning of the bits used to construct the channel symbols, and the performance evaluation and analysis of the system merely involves exchanging one encoding scheme for another. The only potential difficulties arise from codes of different rates leading to transmission lengths that differ from the original payload data transmission, but this can easily be handled through defining a different data sequence $b[n]$.

However, in order to test signals with different transmit signal constellations, a number of questions might arise. 
First, to mimic the desired experimental result, the inserted dither sequence at the transmitter side must be removed at the receiver. For this to be accomplished though a simple subtractive dither \emph{before equalization}, there would need to be no intersymbol interference (ISI) in the channel. For use with channels that do exhibit ISI, the dither must be \emph{incorporated into the equalizer}, as shown in Fig. 2, such that the equalizer, for example, forms an estimate of $\hat{\bf{g}}[n]$, from which $\hat{\bf{f}}[n]$ can be obtained through subtractive dither. There are a number of other ways in which the equalizer can incorporate the dither ${\bf m}[n]$ directly, though an implicit assumption in this process is that the underlying channel, from transmitter to receiver can be well-modeled as a linear, possibly time-varying, operation, such that superposition of the channel effects due to the immutable transmitted sequence ${\bf{g}}[n]$ can be algorithmically altered to mimic the transmission of the desired sequence ${\bf f}[n]$ through superposition (actually, only additivity is required). While the XOR processing described above for FEC experimentation required no such linearity assumption, substantial deviations from linearity would affect  conclusions formed based on an additive dither anaylsis.  Furthermore, while the choice of dither sequence ${\bf{m}}[n]$ is uniquely determined given ${\bf{f}}[n]$ and ${\bf{g}}[n]$, it is not unique given only the transmitted symbols ${\bf{g}}[n]$ and a desired alternate constellation in which the sequence ${\bf{f}}[n]$ must lie. The subsequent choice of  mapping used to generate the symbols in the new constellation from the sequence $p[n]$ may give rise to different dither sequences and have implications into transmit signal constraints and performance conclusions, as we will explore.

Let us assume that a set of data transmissions were conducted for waveforms generated from QPSK transmit signal constellations and the goal is to predict the behavior of this communication system had 16QAM symbols been transmitted instead, without undertaking another field experiment. In this case, any choice of symbol vectors comprising 16QAM symbols could be considered as having been virtually transmitted through proper definition of the dither sequence, ${\bf m}[n]$ as described above. In Figure \ref{fig:dither}, two different mappings between 16QAM and QPSK constellations are shown, e.g., when generating virtual 16QAM symbols, each transmitted QPSK symbol can be mapped to one of its nearest four neighbors in the 16QAM constellation, or to the furthest four locations, depending on the particular mapping that was used to create the 16QAM sequence. Intuitively, the nearest neighbor mapping in Figure \ref{fig:dither} (a) seems preferable over the choice described in (b), because the magnitude of the dither vector is smaller; and might have a smaller effect on any deviations from the expected results. In this section, this statement will be made more precise. Note that this nearest neighbor mapping can be obtained by selecting the four bits that comprise the 16QAM symbols such that the two bits that govern the quadrant are the same as the two bits that comprise the transmitted QPSK symbol, and the remaining two bits are taken from either the latter half of the data sequence (assuming the 16QAM sequence is half the length of the QPSK sequence), or from data not previously transmitted (assuming the 16QAM sequence is of the same length as the QPSK sequence and hence has twice the number of code bits $p[n]$).

\subsection{Problem Formulation}

\begin{figure}[h]
	\centering
	\includegraphics[scale=0.6]{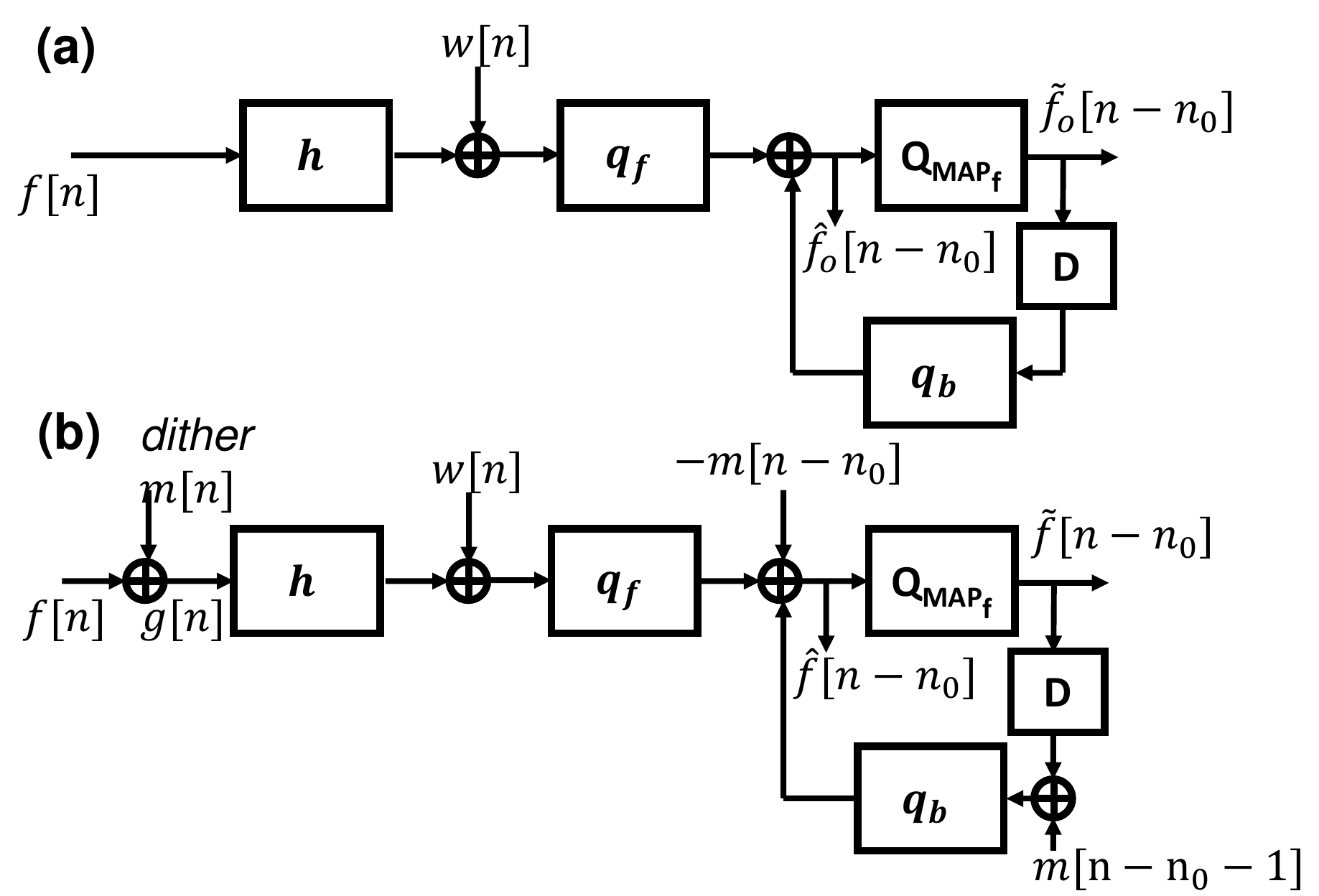}
	\caption{(a) Baseband schematic for point-to-point communication of the symbol sequence $f[n]$ with a decision feedback equalizing receiver (b) post-experimental modification of an experiment that transmitted $g[n]$ and virtual dither sequence insertion to emulate the transmission of $f[n]$.\label{fig:baseband}}
\end{figure}

A single-carrier point-to-point communication link is schematically shown with a decision feedback equalizer (DFE) receiver in Figure \ref{fig:baseband}. We assume a complex baseband system model with finite delay and Doppler spreads for the channel, resulting the (possibly multi-channel) waveform ${\bf y}[n]$ given as follows:
 
\begin{equation}\label{eq:linearchan}
    {\bf y}[n] = \sum_{l=0}^{L-1} {\bf h}_{n}[l] {\bf f}[n-l] + {\bf w}[n],
\end{equation}
  
where the length of time-varying FIR impulse response ${\bf h}_n[l]$ at time instant $n$ is limited by the maximum delay spread $L$. Additive noise ${\bf w}[n]$ need not be Gaussian, but is assumed i.i.d.  The time-varying impulse response is assumed to capture the end-to-end signal paths from source transducers to receive transducers, including any amplifiers, signal conditioning, and ocean-acoustic channel effects.

On the receiver side, the resulting intersymbol interference is equalized via a DFE structure with feedforward filters with $L_f$ taps and a feedback filter with $L_b$ taps with decision delay $n_0$. Because of the existence of the slicer during the equalization process, the DFE is essentially a nonlinear element (and therefore violates the assumption that superposition holds for end-to-end processing). However, in this work, we assume the system to be linear, which amounts to assuming that the DFE processes correct symbol decisions, as is common in DFE analysis \cite{proakis}. This can be seen by replacing the slicer with another path from the source directly to the equalizer with only the feedback filter and a delay in between, making the entire process once again linear. This is a reasonable assumption for operation at high signal-to-noise ratio (SNR), for which $\tilde{f}_o[n] \approx f[n]$. This also holds when the DFE is operated in training mode, when the slicer is replaced directly by known training data. Finally, the equalized symbol $\hat{f}_o[n]$ is given by

\begin{equation}\label{eq:fohat_mse}
\begin{aligned}
\hat{f}_o[n-n_0] &= q_f^{*}y_{n}^{n-L_f+1}-q_b^{*} \tilde{f_o}_{n-n_0-1}^{n-n_0-L_b}  \\
& \approx q_f^{*} \boldsymbol{H} f_{n}^{n-L_f-L+2}-q_b^{*} f_{n-n_0-1}^{n-n_0-L_b} + q_f^{*}w_{n}^{n-L_f+1},
\end{aligned}
\end{equation}

where $q_f$ and $q_b$ are $L_f \times 1$, $L_b \times 1$ vectors each representing feedforward and feedback filter taps, in the scalar case. A vector written, $x_{n_1}^{n_2}$,  is defined as $[x[n_1], x[n_1-1], ..., x[n_2]]^T $, and the superscript * denotes the Hermitian operator. Finally, symbol decisions, $\tilde{f}_o[n]$, are made by slicing (quantizing) the equalizer output $\hat{f}_o[n]$ to the nearest member of the transmitted symbol constellation. The matrix $\boldsymbol{H}$ refers $L_f \times (L_f+L-1)$ matrix for the time-varying channel model in \eqref{eq:linearchan}, i.e.,
 
\begin{equation}\label{eq:channel_matrix}
    \boldsymbol{H} = 
    \begin{bmatrix}
    h_n[0] & h_{n-1}[1] & h_{n-2}[2] & \dots  & 0 \\
    0 & h_{n-1}[0] & h_{n-2}[1] & \dots  & 0 \\
    \vdots & \vdots & \vdots & \ddots & \vdots \\
    0 & 0 & 0 & \dots  & h_{n-L_f-L+2}[L-1]
\end{bmatrix}.
\end{equation}

We consider that a baseband symbol sequence $g[n]$ following the symbol constellation $\bold{MAP_{g}}$, was originally sent via the system. Here, the sequence $g[n]$ is assumed to be white, i.e., statistically uncorrelated. After the experiment was completed, we consider a virtual transmission of a signal with the different symbol mapping $\bold{MAP_{f}}$, i.e., $\bold{MAP_{f}}\neq \bold{MAP_{g}}$ . Recall that the sequence from the new mapping, $f[n]$, can be arbitrarily chosen making use of the FEC modification technique, and the resulting bit sequence mapped onto that used in the experiment through a proper XOR sequence. Figure \ref{fig:baseband} (a) illustrates a virtual experiment, where the sequence $f[n]$ was desired as the transmitted sequence rather than the sequence $g[n]$ used at the time of the experiment. In our post-experimental scheme illustrated in Figure \ref{fig:baseband} (b), the resulting output $\hat{f}[n]$ is an approximation of the desired output $\hat{f}_o[n]$ in (a), i.e., we seek a dither sequence $\bar{m}$ given by
 
\begin{equation} \label{eq:optm}
    \bar{m} = \argmin_{m \in \bold{M}} \textrm{Dis}(\hat{f}_o, \hat{f}|g, m), 
\end{equation}
  
where $\textrm{Dis}(\hat{f}_o, \hat{f}|g, m)$ denotes a distance measure of interest between $\hat{f}_o$ and $\hat{f}$ for a given transmitted sequence $g$ and the choice of dither $m$. $\bold{M}$ is a set of white dither sequences such that $f=g+m$ for $f \in \bold{MAP_{f}}$. We next discuss two potential means for making this selection: the mean-squared deviation (MSD) and the Kullback-Leibler (KL) divergence.

\subsection{Minimum Mean Squared Deviation Criterion}

The output from post-experimental analysis $\hat{f}[n]$ is given by:
 
\begin{equation}\label{eq:fhat}
\begin{aligned}
\hat{f}[n-n_0]  &\approx q_f^{*} \boldsymbol{H} g_{n}^{n-L_f-L+2}-q_b^{*} g_{n-n_0-1}^{n-n_0-L_b} + q_f^{*}w_{n}^{n-L_f+1}-m[n-n_0] \\
& = q_f^{*} \boldsymbol{H} f_{n}^{n-L_f-L+2}-q_b^{*} {f}_{n-n_0-1}^{n-n_0-L_b} + q_f^{*}w_{n}^{n-L_f+1}+(q_f^{*} \boldsymbol{H} {m}_{n}^{n-L_f-L+2}-q_b^{*} {m}_{n-n_0-1}^{n-n_0-L_b}-m[n-n_0]) \\
&= \hat{f}_o[n-n_0]+(f^{*} \boldsymbol{H} {m}_{n}^{n-L_f-L+2}-b^{*} {m}_{n-n_0-1}^{n-n_0-L_b}-m[n-n_0]).
\end{aligned}
\end{equation}
  
The last equality comes from \eqref{eq:fohat_mse}. 

The strongest condition we can impose on the problem described in \eqref{eq:optm} is approximating $\hat{f}[n]$ with $\hat{f}_o[n]$, element by element. In this case, \eqref{eq:optm} can be rewritten as a  minimum mean squared error relation given by 
 
\begin{equation}
    \bar{m} = \argmin_{m \in \bold{M}} E(|\hat{f}[n]-\hat{f}_o[n]|^2).
\end{equation}
  
The mean squared deviation $E(|\hat{f}[n]-\hat{f}_o[n]|^2)$ can be readily computed from \eqref{eq:fhat} as
 
\begin{equation}\label{eq:ditheroutout}
    E(|\hat{f}[n]-\hat{f}_o[n]|^2) = E(|m[n]|^2)(q_f^{*}\boldsymbol{H}\boldsymbol{H}^*q_f+q_b^{*}q_b+1-2 \Re\{(q_f^{*}\boldsymbol{H}_1^{L_b-n_0-1}{q_b}_{n_0+2}^{L_b}-q_f^{*}\boldsymbol{H}_{n-n_0})\}),
\end{equation}
  
where the whiteness of $m[n]$ was used in the derivation and $\Re\{x\}$ denotes the real part of $x$. Here, $\boldsymbol{H}_{n-n_0}$ denotes the ${(n-n_0)}^\textrm{th}$ column of matrix $\boldsymbol{H}$ and $\boldsymbol{H}_a^b$ refers a submatrix composed of $a^\textrm{th}$ to $b^\textrm{th}$ columns of $\boldsymbol{H}$. It can be seen from \eqref{eq:ditheroutout} that the output deviation is proportional to the magnitude of the dither sequence, as suggested previously in this section. In other words, sufficient and necessary conditions for the dither $m[n]$ to minimize the mean squared deviation can be summarized in the following theorem; \begin{theorem}
Given a linear channel model as in \eqref{eq:linearchan}, a desired constellation for transmission of the sequence $f[n]$ and another  transmitted symbol sequence $g[n]$, let $\tilde{f}_o[n]$ be the DFE receiver output when $f[n]$ is transmitted directly, and let $\tilde{f}[n]$ be the DFE receiver output when dither $\bar{m}[n]=g[n]-f[n]$ is used to emulate transmission of $f[n]$. Then the optimal dither minimizing the mean squared deviation between the $\tilde{f}_o[n]$ and $\tilde{f}[n]$ is the one with the minimal $l_2$ norm and the resulting mean squared deviation (MSD) is given by \eqref{eq:ditheroutout}.
\end{theorem}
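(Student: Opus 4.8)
The plan is to reduce the minimization in \eqref{eq:optm} under the mean-squared-deviation criterion to the minimization of the per-symbol dither power $E(|m[n]|^2)$, and then to identify the minimizer with the nearest-neighbor (minimal $l_2$ norm) mapping. I would start from the last line of \eqref{eq:fhat}, which already isolates the deviation $e[n] := \hat{f}[n-n_0]-\hat{f}_o[n-n_0]$ as a term driven \emph{only} by the dither: the contributions of the actually transmitted symbols $g$ and of the noise $w$ cancel, so $e[n]$ depends neither on the data nor on the noise realization, but only on the dither samples together with the (immutable) channel matrix $\boldsymbol{H}$ and the equalizer taps $q_f,q_b$. Gathering the three pieces $q_f^{*}\boldsymbol{H} m_{n}^{n-L_f-L+2}$, $-q_b^{*}m_{n-n_0-1}^{n-n_0-L_b}$ and $-m[n-n_0]$, I would rewrite $e[n]$ as a single linear form $\sum_{j}c_j\,m[j]$ in the dither samples involved, whose coefficient vector $c$ is read off from the rows of $q_f^{*}\boldsymbol{H}$, the entries of $q_b$, and the isolated $-1$.

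Next I would compute the deviation power, $E(|e[n]|^2)=\sum_{j,j'}c_j\bar c_{j'}\,E\big(m[j]\overline{m[j']}\big)$. Since $m[n]$ is white (zero-mean, uncorrelated, wide-sense stationary), only the diagonal $j=j'$ terms survive and each autocorrelation equals the common value $E(|m[n]|^2)$, so $E(|e[n]|^2)=E(|m[n]|^2)\,\|c\|_2^2$. Expanding $\|c\|_2^2$ into the feedforward, feedback and isolated self-energies $q_f^{*}\boldsymbol{H}\boldsymbol{H}^{*}q_f$, $q_b^{*}q_b$ and $1$, plus the cross-terms coming from the overlap of the dither index ranges --- note that the feedback range $n-n_0-1,\dots,n-n_0-L_b$ does not contain the index $n-n_0$, so only feedforward/feedback and feedforward/isolated cross-terms appear --- reproduces exactly the channel/equalizer factor appearing in \eqref{eq:ditheroutout}; denote it $C$. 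The two properties of $C$ that the argument uses are that $C=\|c\|_2^2\ge 0$ and that $C$ is independent of the dither; hence $E(|e[n]|^2)=C\cdot E(|m[n]|^2)$.

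It then remains to minimize $E(|m[n]|^2)$ over the admissible set $\bold{M}$. Writing $m[n]=g[n]-f[n]$ with $g$ fixed and $f[n]$ constrained to lie in $\bold{MAP_{f}}$, the quantity $E(|g[n]-f[n]|^2)$ is the average squared distance from each transmitted symbol to its image in the new constellation; since the image may be chosen independently for each symbol, this is minimized precisely by mapping each $g[n]$ to a closest point of $\bold{MAP_{f}}$, i.e.\ by the white dither of minimal $l_2$ norm. In the non-degenerate case $C>0$ this characterization is both \emph{necessary} and \emph{sufficient}: any dither that is not of minimal norm has strictly larger $E(|m[n]|^2)$ and therefore, by $E(|e[n]|^2)=C\,E(|m[n]|^2)$, strictly larger MSD, whereas the minimal-norm dither attains the value of \eqref{eq:ditheroutout} with $E(|m[n]|^2)$ set to this minimum. (If $C=0$ the deviation vanishes for every white dither and the statement is vacuous.) The step I expect to be the main obstacle --- and the one I would state most carefully --- is precisely this last logical turn: it is the whiteness restriction that collapses the MSD to a scalar multiple $C\,E(|m[n]|^2)$ of the dither power rather than something depending on the whole autocorrelation sequence of $m$, and it is the strict positivity of $C$ in the non-degenerate case that upgrades ``sufficient'' to ``necessary''; both of these must be argued, not merely asserted.
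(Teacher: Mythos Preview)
Your argument is correct and follows the same route as the paper's proof: derive \eqref{eq:ditheroutout}, note that the channel/equalizer factor multiplying $E(|m[n]|^2)$ is nonnegative and dither-independent, and conclude that minimizing MSD over white dithers reduces to minimizing $E(|m[n]|^2)$, i.e.\ to the minimal-$l_2$-norm mapping. The paper's own proof is two sentences and simply asserts that the factor is positive; your presentation is strictly more informative because you exhibit that factor as $\|c\|_2^2$ for the coefficient vector $c$ of the linear form $e[n]=\sum_j c_j m[j]$, which makes nonnegativity immediate, and you are explicit about the role of whiteness in collapsing the cross-terms, about the vanishing feedback/isolated cross-term, and about the degenerate case $C=0$ --- none of which the paper spells out.
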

\begin{proof}
The derivation of \eqref{eq:ditheroutout} yields the output deviation as a function of dither sequence $m[n]$. It can be shown that that the second term on the right hand side of \eqref{eq:ditheroutout} is positive and hence the MSD is directly proportional to the $l_2$ norm of $\bar{m}[n]$, yielding the result.
\end{proof}
The equalizer is designed to efficiently mitigate the mean-square effects of ISI and as it does, the multiplicative term on the right side of \eqref{eq:ditheroutout} proportionally decreases (and becomes zero as ISI is cancelled), reducing the effect of $E(|m[n]|^2)$. The dither generation described in Figure \ref{fig:dither} (a) satisfies the condition in Thm. 1, hence, among all choices, guarantees our post-experimental evaluation will be closest to the desired experimental results, in an element-wise sense. This can be achieved whenever the constellation is mapped from a $2^k$-QAM to a $2^{mk}$-QAM constellation, or from a $2^{mk}$-QAM constellation to a $2^k$-QAM constellation. In the former case, $k$ bits from the binary sequence $p[n]$ are mapped into the symbols $f[n]$ and $mk$ bits are needed for the symbols $g[n]$. Assuming the sequence $f[n]$ is of length $mN$ and letting $g[n]=A(2^{m-1}f[n]+2^{m-2}f[n+N]+\cdots+f[n+(m-1)N])$ yields the sequence $g[n]$ in the same (sub-)quadrant as $f[n]$, where $A$ is a suitably chosen energy normalization. In the latter case, $mk$ bits from the binary sequence $p[n]$ are mapped into the symbols $f[n]$ and only $k$ bits are needed for $g[n]$. Truncating the representation to include only the first $k$ bits of $p[n]$ mapped into $f[n]$, yields the sequence $g[n]$ in the same (sub-)quadrant as $f[n]$. An example of this procedure for $2^2$-QAM (QPSK) and $2^4$-QAM is shown in Figure \ref{fig:constellationmap}.

\begin{figure}[h]
	\centering
	\includegraphics[scale=0.8]{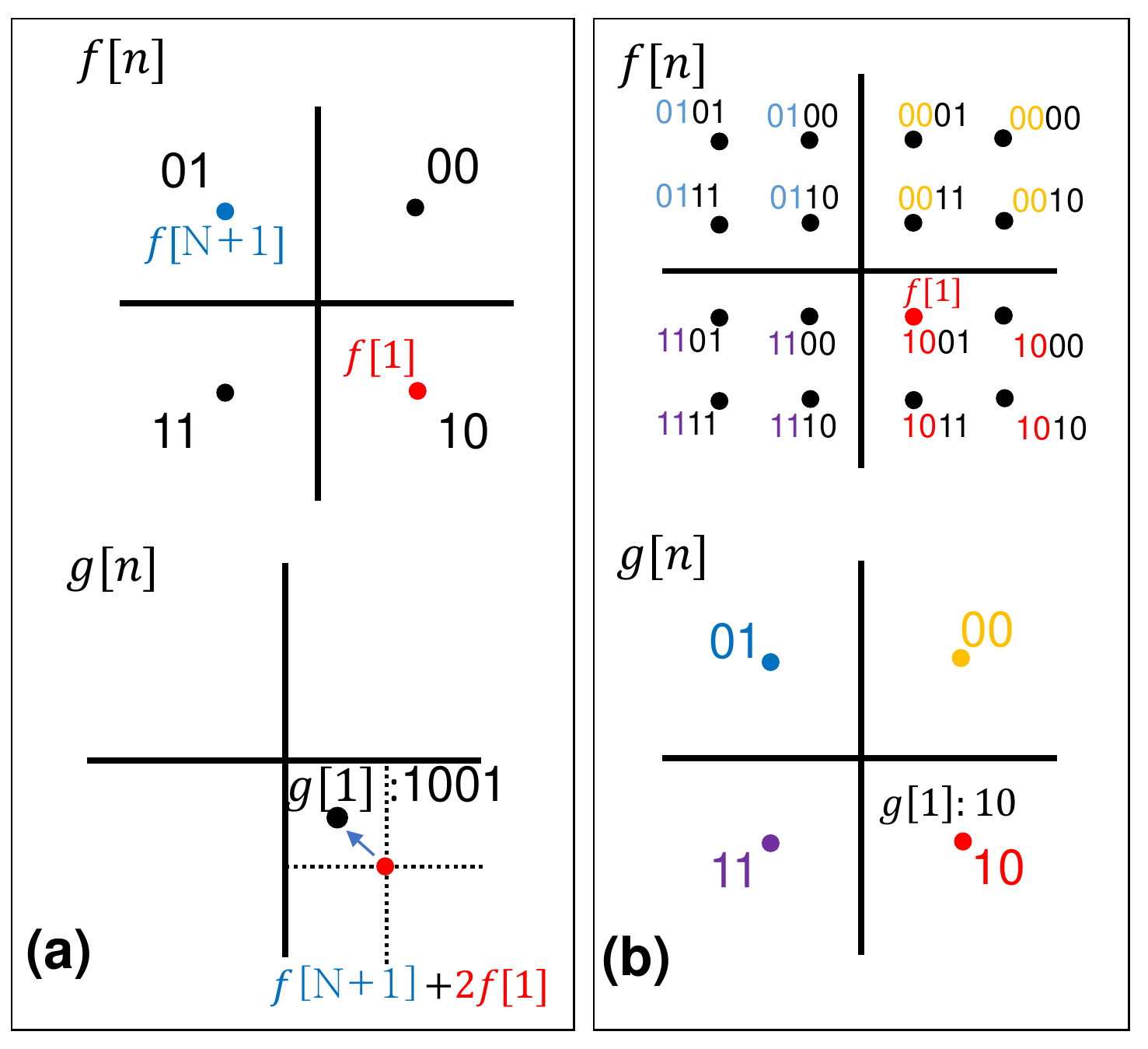}
	\caption{(a) Originally transmitted QPSK constellation used to transmit $f[n]$ and dither-generated 16QAM constellation for $g[n]$ (b) Originally transmitted 16QAM constellantion used to transmit $f[n]$ and dither-generated QPSK constellation for $g[n]$.\label{fig:constellationmap}}
\end{figure}

\subsection{Minimum KL Divergence Criterion}

Rather than enforcing that our experiment emulation process be faithful on a symbol-by-symbol basis, we often are only concerned with the system performance as measured statistically over the entire transmission. Hence, we are  frequently more concerned that the our experiment emulation faithfully reproduce the distribution of outcomes, rather than in an element-wise sense. For example, we are often interested in the bit error rate (BER) computed empirically over long sequences and hope they provide an accurate representation of that which would have occurred under experimental conditions. As a result, for each symbol $k\in\bold{MAP_f}$, it is often sufficient to require equality in distribution, i.e., $p(\hat{f}[n]|f[n]=k) = p(\hat{f_o}[\ell]|f[\ell]=k), n\neq \ell$, where $p(\cdot|\cdot)$ denotes a conditional density. 

Now, the measure in \eqref{eq:optm} is interpreted as a statistical distance between two empirical distributions, $p(\hat{f}[n]|f[n]=k)$ and $p(\hat{f_o}[\ell]|f[\ell]=k)$. For instance, KL divergence can be used to redefine the optimization problem \eqref{eq:optm} as follows:
 
\begin{equation} \label{eq:mKL}
     \bar{m}_{KL} = \argmin_{m \in \bold{M}} \sum_{k\in\bold{MAP_f}} p(f[n]=k) D_\text{KL}(p(\hat{f}[n]|f[n]=k)\parallel p(\hat{f_o}[\ell]|f[\ell]=k)),
\end{equation}
  
where $D_\text{KL}(\cdot\parallel \cdot)$ denotes the KL divergence. The right side of \eqref{eq:mKL} is the average KL divergence between the desired output $\hat{f}_o[n]$ and post-experimental output $\hat{f}[n]$ conditioned on the virtual transmitted symbol $f[n]$, averaged over the symbols in the constellation $\bold{MAP_f}$.  Note that the conditioned distribution is used in place of the marginal distributions $p(\hat{f}[n])$ and $p(\hat{f}_o[n])$, to preserve the desired transmitted signal; for example, the two marginal distributions might be equivalent under transmit symbol permutation. However, permutation of the symbols loses the transmitted information content.

Approximating the distributions as conditionally Gaussian \cite{tuchler2011turbo}, the KL divergence can be readily computed. When conditioned on $f[n-n_0]=k$, i.e.,  virtual symbol $k$ was transmitted at time instant $n-n_0$, then
 
\begin{equation} \label{eq:fohatcondition}
\hat{f}_o[n-n_0]\biggr\rvert_{f[n-n_0]=k} = q_f^{*}\boldsymbol{H}_{n_0+1} k + q_f^{*} \boldsymbol{H}^{n-n_0} f^{n-n_0}-q_b^{*} f_{n-n_0-1}^{n-n_0-L_b} + q_f^{*}w_{n}^{n-L_f+1},
\end{equation}
  
where $\boldsymbol{H}^{n-n_0}$ is a submatrix composed of all columns of $\boldsymbol{H}$ except for the ${(n-n_0)}^\textrm{th}$ column and, similarly, $f^{n-n_0}$ refers to a subvector whose ${(n-n_0)}^\textrm{th}$ element is removed from $f$. Since $f[n]$ is assumed to be white, hence an independently distributed random process, and $w$ is an i.i.d. noise vector, the empirical distribution of $\hat{f}_o[n]$ is well-approximated as Gaussian, as the channel length $L$ and equalization filter dimension $L_f$, $L_b$ grow, as the Berry-Esseen theorem \cite{esseen1945fourier} suggests, and as illustrated in the following lemma. \begin{lemma}
Let $\boldsymbol{v} = [v_1, v_2, ... , v_n]$ be an $n\times1$ row vector with non-zero entries, and which satisfies $\sum_{i=1}^{n} v_i^2 = c < \infty$ and $\max_{j,k} |v_j/v_k| = d < \infty$. Then, for i.i.d. random variables $f_1, f_2, f_3, ... , f_n$ with $E(f_i) = 0$, and $E(|f_i|^2) = 1$, $\sum_{i=1}^n v_i f_i$ converges to the Gaussian random variable $\mathcal{N}(0,c)$ in distribution as $n \rightarrow \infty$.
\end{lemma}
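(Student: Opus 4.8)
The plan is to recognize this lemma as an instance of the Lindeberg--Feller central limit theorem for a triangular array of independent summands (the statement is implicitly about a sequence of weight vectors $\boldsymbol{v}=\boldsymbol{v}^{(n)}$ of growing length $n$, each normalized so that $\sum_i v_i^2=c$), and to verify the Lindeberg condition directly from the two hypotheses on $\boldsymbol{v}$. The first and most important step is to extract from those hypotheses that the individual weights are \emph{uniformly negligible}: since $\max_{j,k}|v_j/v_k| = d$ we have $v_i^2 \le d^2 v_j^2$ for every pair $i,j$, and summing over $j=1,\dots,n$ gives $n\,v_i^2 \le d^2\sum_j v_j^2 = d^2 c$, hence $\max_i v_i^2 \le d^2 c/n \to 0$. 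This is the only place the ratio bound $d$ is used, and it is exactly what licenses treating the fixed sum $\sum_{i=1}^n v_i f_i$ as a triangular array in which each entry $X_{n,i}:=v_i f_i$ contributes asymptotically negligibly to the total second moment $E\bigl|\sum_i v_i f_i\bigr|^2 = \sum_i v_i^2 = c$.

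Second, I would verify the Lindeberg condition: for every $\varepsilon>0$, $\tfrac{1}{c}\sum_{i=1}^n E\bigl[\,|X_{n,i}|^2\,\mathbf{1}_{\{|X_{n,i}|>\varepsilon\sqrt{c}\}}\bigr]\to 0$. Using $|X_{n,i}| = |v_i|\,|f_i| \le d\sqrt{c/n}\,|f_i|$ from Step 1, the truncation event $\{|X_{n,i}|>\varepsilon\sqrt{c}\}$ is contained in $\{|f_i|>\varepsilon\sqrt{n}/d\}$, so $E\bigl[|X_{n,i}|^2\,\mathbf{1}_{\{|X_{n,i}|>\varepsilon\sqrt{c}\}}\bigr] \le v_i^2\, E\bigl[|f_1|^2\,\mathbf{1}_{\{|f_1|>\varepsilon\sqrt{n}/d\}}\bigr]$ by the common distribution of the $f_i$. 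Summing over $i$ and dividing by $c$ collapses the $v_i^2$ factors, leaving $E\bigl[|f_1|^2\,\mathbf{1}_{\{|f_1|>\varepsilon\sqrt{n}/d\}}\bigr]$, which tends to $0$ as $n\to\infty$ by dominated convergence since $E|f_1|^2 = 1 < \infty$. The Lindeberg--Feller theorem then gives $\sum_{i=1}^n X_{n,i}/\sqrt{c}\Rightarrow\mathcal{N}(0,1)$, i.e. $\sum_i v_i f_i \Rightarrow \mathcal{N}(0,c)$.

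A self-contained alternative avoids citing Lindeberg--Feller and works with characteristic functions: put $\phi(s):=E[e^{\mathrm{i}s f_1}]$, which by $E f_1=0$, $E|f_1|^2=1$ satisfies $\phi(s)=1-s^2/2+o(s^2)$ as $s\to 0$; then $E\bigl[e^{\mathrm{i}t\sum_i v_i f_i}\bigr]=\prod_{i=1}^n\phi(v_i t)$, and since $\max_i|v_i t|\to 0$ the standard product-to-exponential lemma (if $\max_i|z_{n,i}|\to 0$ and $\sum_i z_{n,i}\to z$ then $\prod_i(1+z_{n,i})\to e^z$) applied with $z_{n,i}=-v_i^2 t^2/2+o(v_i^2 t^2)$, $\sum_i z_{n,i}\to -ct^2/2$, yields $\prod_i\phi(v_i t)\to e^{-ct^2/2}$, the characteristic function of $\mathcal{N}(0,c)$; L\'evy's continuity theorem then finishes the argument. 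For complex-valued $f_i$ one runs the same reasoning on real linear combinations $\Re\!\bigl(\bar{\theta}\sum_i v_i f_i\bigr)$ and reads off $\mathcal{N}(0,c)$ with the induced covariance; note also that the Berry--Esseen reference \cite{esseen1945fourier} would in addition give a convergence rate under the stronger hypothesis $E|f_1|^3<\infty$, but is not needed for convergence alone.

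The individual steps are routine; the only real content — and the step I would be most careful about — is Step 1, the passage from the two scalar constraints on $\boldsymbol{v}$ to the uniform bound $\max_i v_i^2 = O(1/n)$, because this is what legitimizes treating the fixed sum as a triangular array and is precisely the hypothesis that excludes the degenerate cases (e.g. one dominant weight, as when some $v_k^2$ stays bounded away from $0$) in which the limit would fail to be Gaussian. Everything after that is a standard truncation plus dominated-convergence verification of Lindeberg's condition.
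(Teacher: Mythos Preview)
Your proof is correct but takes a different route from the paper. The paper invokes the Berry--Esseen theorem directly: it bounds the Berry--Esseen quantity $\psi_0 = c^{-3/2}\sum_i|v_i|^3 \le c^{-3/2}\cdot\max_j|v_j|\cdot c \le c^{-1/2}\cdot d\min_j|v_j| \le d/\sqrt{n}$ (using $\min_j|v_j|\le\sqrt{c/n}$), obtaining not just convergence but an explicit rate $O(n^{-1/2})$ in Kolmogorov distance. You instead verify Lindeberg's condition (with a characteristic-function alternative), which needs only the second-moment hypothesis $E|f_1|^2=1$ actually stated in the lemma, whereas Berry--Esseen requires the finite third moment $E|f_1|^3<\infty$ that the paper tacitly assumes but never states --- a point you in fact anticipate in your final remark. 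Both arguments rest on the same key observation, your Step~1, that the ratio bound forces $\max_i v_i^2 = O(1/n)$; the paper reaches it via $\max_j|v_j|\le d\min_j|v_j|$ and square-summability of the minimum, you via $v_i^2\le d^2 v_j^2$ summed over $j$. Your approach buys fidelity to the stated hypotheses; the paper's buys a quantitative rate at the cost of a slightly stronger (unstated) moment assumption.
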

\begin{proof}
Denote $F_n$ and $\Phi$ as the cumulative distribution functions of $S_n = \sum_{i=1}^n v_i f_i/\sqrt{c}$ and $\mathcal{N}(0,1)$, respectively. Then, by the Berry-Esseen theorem, there exists a $C_0$ that satisfies for all $n$\\
 
\begin{equation}
\sup_{x\in \Omega}\left|F_n(x) - \Phi(x)\right| \le C_0\cdot\psi_0
\end{equation}
  
holds, where 
 
\begin{equation}
\psi_0=c^{-3/2}\cdot\sum\limits_{i=1}^n|v_i|^3 \leq c^{-3/2} \cdot \max_{j} |v_j| \sum_{j=1}^n|v_i|^2 \leq c^{-1/2} \cdot d \min_{j} |v_j| \leq \frac{d}{\sqrt{n}}.
\end{equation}
  
The first equality comes from the Berry-Esseen theorem, the second inequality follows from $\max_{j,k} |v_j/v_k| = d < \infty$, and $\sum_{i=1}^{n} v_i^2 = c < \infty$ and the last inequality follows from bounding $\min_j |v_j|$ using its square summability again. Therefore, the $\sum_{i=1}^n v_i f_i$ converges to a Gaussian random variable in distribution as $n \rightarrow \infty$.
\end{proof}
The conditions described in the lemma hold for the row vectors $q_f^* \boldsymbol{H}_{n_0+1}$ and $q_b^*$ in \eqref{eq:ditheroutout}, when the impulse response is square summable. Therefore, for channels with long delay spreads, typical in underwater acoustic communications, the Gaussian assumption on $\hat{f}_o$ would be reasonable for our analysis. It should be noted that conditioning on the event $f[n-n_0] = k$ plays a critical role in the Gaussian approximation. Without conditioning, the term $f[n-n_0]=k$ in \eqref{eq:fohatcondition} is a random variable which exhibits highly non-Gaussian structure in the received symbol constellation. Similarly $\hat{f}[n]$, i.e.,
 
\begin{equation} \label{eq:10}
\hat{f}[n-n_0]\biggr\rvert_{f[n-n_0]=k} = \hat{f}_o[n-n_0]+ q_f^{*}\boldsymbol{H}_{n_0+1} m[n-n_0] + q_f^{*} \boldsymbol{H}^{n-n_0} m^{n-n_0}-q_b^{*} m_{n-n_0-1}^{n-n_0-L_b}, 
\end{equation}
  
can also be approximated as a Gaussian process. 

With the Gaussian assumption, the target function in \eqref{eq:mKL} can be put into closed form as follows \cite{thomascover}:
 
\begin{equation} \label{eq:11}
       \bar{m} = \argmin_{m \in \bold{M}} \sum_{k\in\bold{MAP_f}} p(f[n]=k) \bigg( \log\left(\frac{\sigma_{2,k}}{\sigma_{1,k}}\right) + \frac{\sigma_{1,k}^2 + (\mu_{1,k}-\mu_{2,k})^2}{2\sigma_{2,k}^2} - \frac{1}{2} \bigg),
\end{equation}
  
where $\hat{f}[n-n_0]\biggr\rvert_{f[n-n_0]=k} \sim \mathcal{N}(\mu_{1,k}, \sigma_{1,k}^2)$, and $\hat{f}_o[n-n_0]\biggr\rvert_{f[n-n_0]=k} \sim \mathcal{N}(\mu_{2,k}, \sigma_{2,k}^2)$. The means $\mu_{1,k}$ and $\mu_{2,k}$ and the variances $\sigma_{1,k}^2$ and $\sigma_{2,k}^2$ of each distribution can be derived as
 
\begin{equation}\label{eq:12}
\mu_{1,k} = \mu_{2,k}+q_f^{*}\boldsymbol{H}_{n_0+1} E_{m,k},
\end{equation}

\begin{equation}\label{eq:13}
\mu_{2,k} = q_f^{*}\boldsymbol{H}_{n_0+1} k,
\end{equation}

\begin{equation} \label{eq:14}
\begin{aligned} 
\sigma_{1,k}^2 &= E(|w[n]|^2)q_f^*q_f+C \cdot E(|g[n]|^2)-|(q_f^* \boldsymbol{H}_{n_0+1}-1)E_{g,k}|^2 \\
&\quad -2\Re\{((q_f^* \boldsymbol{H}_{n_0+1}-1)E_{g,k}E_{m,k}^*)\}-|E_{m,k}|^2.
\end{aligned}
\end{equation}

\begin{equation} \label{eq:15}
\sigma_{2,k}^2 = E(|w[n]|^2)q_f^*q_f+C \cdot E(|f[n]|^2)-|(q_f^* \boldsymbol{H}_{n_0+1}-1)k|^2, 
\end{equation}
  
where $E_{m,k} = E(m[n]\rvert f[n]=k)$, and $E_{g,k} = E(g[n]\rvert f[n]=k)$ and the constant $C$ in \eqref{eq:14} and \eqref{eq:15} refers to the normalized mean-squared error interference which arises after the equalizer, independent of dither sequences, i.e., $C = (q_f^{*}\boldsymbol{H}\boldsymbol{H}^*q_f+q_b^{*}q_b+1-2 \Re\{(q_f^{*}\boldsymbol{H}_1^{L_b-n_0-1}{q_b}_{n_0+2}^{L_b}-q_f^{*}\boldsymbol{H}_{n-n_0})\})$. From these expressions, we can see that a substantial portion of the deviation arises from the equalizer bias $(q_f^* \boldsymbol{H}_{n_0+1}-1)$. The derivation of the above parameters can be found in Appendix A. If we substitute \eqref{eq:12}-\eqref{eq:15} into \eqref{eq:11}, we obtain the following theorem:
\begin{theorem}
Assuming that the energy of an experimentally and a virtually transmitted symbol sequence are equal, i.e., $E(|g[n]|^2)= E(|f[n]|^2)$, then, if there exists a mapping $m\in \bold{M}$ that satisfies $E_{m,k} = E(m[n]\rvert f[n]=k) =0$ for $\forall k \in \bold{MAP_f}$, then $m$ minimizes \eqref{eq:11}, and well-approximates the optimal dither in a KL divergence perspective, $\bar{m}_{KL}$ in \eqref{eq:mKL} as the channel length $L$ and equalization filter dimension $L_f$, $L_b$ grows under the conditions illustrated in Lemma 1.
\end{theorem}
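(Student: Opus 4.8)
The plan is to show that the stated hypothesis forces every summand of the Gaussian-form objective \eqref{eq:11} to vanish, so that $m$ attains the global minimum (namely zero) of a probability-weighted sum of nonnegative KL terms, and then to use Lemma 1 to carry this conclusion back to the original criterion \eqref{eq:mKL}.

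First I would substitute $E_{m,k}=0$ into the parameter formulas \eqref{eq:12}--\eqref{eq:15}. From \eqref{eq:12} we get $\mu_{1,k}=\mu_{2,k}$ immediately, so $(\mu_{1,k}-\mu_{2,k})^2=0$. For the variances, the one auxiliary fact needed is $E_{g,k}=E\!\left(g[n]\mid f[n]=k\right)=k+E_{m,k}$, which follows from the relation $g[n]=f[n]+m[n]$ used in the derivation of \eqref{eq:fhat} together with linearity of conditional expectation; under the hypothesis this reduces to $E_{g,k}=k$. Feeding $E_{m,k}=0$ into \eqref{eq:14} eliminates the cross term and the $|E_{m,k}|^2$ term, and replacing $E_{g,k}$ by $k$ turns $-|(q_f^{*}\boldsymbol{H}_{n_0+1}-1)E_{g,k}|^2$ into $-|(q_f^{*}\boldsymbol{H}_{n_0+1}-1)k|^2$. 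At that point the only discrepancy with $\sigma_{2,k}^2$ in \eqref{eq:15} is the factor $E(|g[n]|^2)$ versus $E(|f[n]|^2)$ multiplying the common constant $C$, and the energy hypothesis $E(|g[n]|^2)=E(|f[n]|^2)$ closes that gap, giving $\sigma_{1,k}^2=\sigma_{2,k}^2$.

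Next I would plug $\mu_{1,k}=\mu_{2,k}$ and $\sigma_{1,k}=\sigma_{2,k}$ into \eqref{eq:11}: each bracketed term collapses to $\log 1 + \tfrac{\sigma_{2,k}^2}{2\sigma_{2,k}^2}-\tfrac12 = 0$, so the whole weighted sum is zero. Because \eqref{eq:11} is a probability-weighted average of Kullback--Leibler divergences, it is bounded below by $0$ and attains $0$ exactly when the conditional Gaussian pairs coincide for every $k$; hence any $m\in\bold{M}$ with $E_{m,k}=0$ for all $k\in\bold{MAP_f}$ is a global minimizer of \eqref{eq:11}.

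Finally, for the assertion that this $m$ well-approximates $\bar{m}_{KL}$ of \eqref{eq:mKL}, I would invoke Lemma 1 for the coefficient rows $q_f^{*}\boldsymbol{H}_{n_0+1}$, $q_b^{*}$ and the companion rows appearing in \eqref{eq:fohatcondition} and \eqref{eq:10}: for a stable channel (square-summable, bounded tap ratios) these rows meet the hypotheses of Lemma 1, so both $p(\hat f[n]\mid f[n]=k)$ and $p(\hat f_o[\ell]\mid f[\ell]=k)$ converge in distribution, as $L,L_f,L_b\to\infty$, to the Gaussians whose parameters were computed above. Since those limiting Gaussians are identical when $E_{m,k}=0$, the true conditional KL divergences in \eqref{eq:mKL} shrink to $0$, and because \eqref{eq:mKL} is itself nonnegative, $m$ asymptotically attains its infimum, i.e.\ it approximates $\bar m_{KL}$ up to the Gaussian-approximation error controlled by the Berry--Esseen bound of Lemma 1. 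I expect this last transfer step to be the delicate point: convergence in distribution does not by itself entail convergence of the KL divergences, so the argument must stay within the same asymptotic regime already used to justify \eqref{eq:11}, treating \eqref{eq:11} as the asymptotically tight surrogate for \eqref{eq:mKL} rather than claiming an exact identity for finite $L,L_f,L_b$.
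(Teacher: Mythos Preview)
Your argument is essentially the paper's approach, and your first half---substituting $E_{m,k}=0$ and $E(|g[n]|^2)=E(|f[n]|^2)$ into \eqref{eq:12}--\eqref{eq:15} to force $\mu_{1,k}=\mu_{2,k}$ and $\sigma_{1,k}^2=\sigma_{2,k}^2$, hence driving every summand of \eqref{eq:11} to zero---is in fact more explicit than the paper, which simply asserts that such an $m$ minimizes $D(p_f\|p_{fo})$ without writing out the parameter check.

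The one place where the paper goes further than you is exactly the ``delicate point'' you flag at the end. You correctly observe that convergence in distribution does not imply convergence of the KL divergences, and you propose to sidestep this by treating \eqref{eq:11} as an asymptotic surrogate for \eqref{eq:mKL}. The paper instead closes that gap directly: it invokes a KL-convergence result (Theorem~2 of \cite{moulin2014kullback}) which guarantees that $D_{\mathrm{KL}}\!\left(p(\hat f\mid f)\,\|\,p(\hat f_o\mid f)\right)\to D_{\mathrm{KL}}(p_f\|p_{fo})$ under mild regularity conditions---matching conditional means, bounded fourth moments, and bounded continuously differentiable densities---all of which hold here once $E_{m,k}=0$. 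The paper then sketches why: split the integration domain into a large compact set $\Omega_1$ and its complement $\Omega_2$; on $\Omega_1$ the integrand is bounded so dominated convergence applies, and on $\Omega_2$ exponential tail bounds drive the contribution to zero. This is the missing ingredient that upgrades your ``asymptotic surrogate'' heuristic to an actual limit statement about \eqref{eq:mKL}.
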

\begin{proof}
As $L$, $L_f$, and $L_b$ grow under the conditions held in lemma 1, $p(\hat{f}[n]|f[n]=k)$ and $p(\hat{f}_o[n]|f[n]=k)$ approach the Gaussian distributions $p_f \sim \mathcal{N}(\mu_{1,k}, \sigma_{1,k}^2)$ and $p_{fo} \sim \mathcal{N}(\mu_{2,k}, \sigma_{2,k}^2)$, respectively.  As the dither $\bar{m}$ minimizes $D(p_f||p_{fo})$, it is enough to show that 
 
\begin{equation} \label{eq:16}
\begin{aligned} 
        \lim_{L, L_f, L_b \rightarrow \infty} D(p(\hat{f}|f) || p(\hat{f}_o|f)) 
    &= D(p_f||p_{fo}).
\end{aligned}
\end{equation}

In general, convergence in distributions does not guarantee convergence of the KL divergence between such distributions. However, \eqref{eq:16} can be ensured, when, conditioned on $f[n]=k$, $\hat{f}[n]$ and $\hat{f}_o[n]$ have the same mean, bounded fourth moments, and their distributions are bounded and continuously differentiable, which can be readily satisfied from the assumptions of the Theorem and 
Theorem 2 of \cite{moulin2014kullback}.  
A sketch of the proof in \cite{moulin2014kullback} is as follows. To prove \eqref{eq:16} we need to move the limit inside the integral for the KL divergence, i.e., 
 
\begin{equation} \label{eq:lebesgueconvergence}
\begin{aligned}
    \lim_{L, L_f, L_b \rightarrow \infty} \int_{x \in \Omega} p(\hat{f}|f)\log \frac{p(\hat{f}|f)}{p(\hat{f}_o|f)} &= \int_{x \in \Omega} \lim_{L, L_f, L_b \rightarrow \infty} p(\hat{f}|f)\log \frac{p(\hat{f}|f)}{p(\hat{f}_o|f)} \\
    &= \int_{x \in \Omega} p_f \log \frac{p_f}{p_{f_o}}.
\end{aligned}
\end{equation}
  
If $p(\hat{f}|f)\log\frac{p(\hat{f}|f)}{p(\hat{f}_o|f)}$ is bounded  on $\Omega$, then \eqref{eq:lebesgueconvergence} is guaranteed by the dominated convergence theorem. Dividing the integration domain $\Omega$ into two regions, $\Omega_1 \subset \Omega$, which is an arbitrarily large compact set, and $\Omega_2 = \Omega \setminus \Omega_1$,
 
\begin{equation} \label{eq:lebesgueconvergence2}
\begin{aligned}
    \lim_{L, L_f, L_b \rightarrow \infty} \int_{x \in \Omega} p(\hat{f}|f)\log \frac{p(\hat{f}|f)}{p(\hat{f}_o|f)} &= \lim_{L, L_f, L_b \rightarrow \infty} \Bigg\{
 \int_{x \in \Omega_1}  p(\hat{f}|f)\log \frac{p(\hat{f}|f)}{p(\hat{f}_o|f)}+\int_{x \in \Omega_2}  p(\hat{f}|f)\log \frac{p(\hat{f}|f)}{p(\hat{f}_o|f)} \Bigg\} \\
    &= \int_{x \in \Omega_1} p_f \log \frac{p_f}{p_{f_o}}+ \lim_{L, L_f, L_b \rightarrow \infty} \int_{x \in \Omega_2}  p(\hat{f}|f)\log \frac{p(\hat{f}|f)}{p(\hat{f}_o|f)},
\end{aligned}
\end{equation}
  
where the second equality comes from the assumption that $\Omega_1$ is compact, and hence the integrand is absolutely bounded over the domain $\Omega_1$. 
Also, It is possible to make the integration over $\Omega_2$ arbitrarily small as $L, L_f, L_b \rightarrow \infty$ by using exponential bounds of $p(\hat{f}|f)$ and $p(\hat{f}_o|f)$, making the second term in the right side of \eqref{eq:lebesgueconvergence2} converging zero. 
\end{proof}
This means that to minimize the deviation of the post-experimental analysis from the real experiment in a distribution sense, the best strategy is to design dither sequences that are well-balanced, while the dither sequence magnitude was important in an element-wise sense. As we can see from \eqref{eq:12} and \eqref{eq:13}, the well-balanced property plays a role in mitigating the bias of the equalizer output which produces the most substantial deviation, when $E(|g[n]|^2)= E(|f[n]|^2)$.  Also, note that a mapping rule satisfying $E_{m,k}=0$ may not always exist, e.g., when $\bold{MAP_g}$ is 16QAM, and $\bold{MAP_f}$ is QPSK.

This approach to post-experimental performance evaluation has advantages over direct simulation using an estimated channel based on measurements. Note that in this approach, since there is no channel estimation, channel estimation errors do not come into play.  However, practical implementation of this approach could yield larger deviations than expected from the analysis above for the following reasons: 1) at high-SNR, residual (uncompensated) ISI dominates over the effects of additive noise, 2) nonlinearities in the receiver, including that of the slicer (and slicer errors) in a decision feedback equalizer were not considered in the analysis, and 3) signal path nonlinearities, including the transmit amplifiers,  transducers and other components, violate superposition and deviate from the analysis above.

\subsection{Distortion Due to Amplifier Non-linearity}

Amplifier non-linearity introduces a natural deformation to the otherwise linear communication pipeline. Consider the following model for amplifier non-linearity imposed on the modulating symbols $g[n]$ with the understanding that the effective range of the constellation, $||g||_{\infty}\triangleq \max |g|$, is known. Here, the deformed signal constellation, denoted by  $\hat{g}[n]$ takes the following form:
 
\begin{equation}
    \tilde{f}[n] =  ||f||_{\infty}\tanh{\left(\frac{f[n]}{\alpha||f||_{\infty}}\right)},
\end{equation}
  
where, $\alpha>1$ denotes the effective range to occupied effective range ratio for the amplifier. Symbols of higher energy are are subject to more severe distortion and symbols of lower energy remain relatively undistorted. Such a distortion, changes effective decision making regions for receiver. Furthermore, the error analysis for the linear pipeline needs to be extended. Consider that:

\begin{equation}
    \tilde{f}[n] = ||g+m||_{\infty}\tanh{\left(\frac{g[n]+m[n]}{\alpha||g+m||_{\infty}}\right)} = (g[n]+m[n]) - \frac{(g[n]+m[n])^3}{\alpha^2||g+m||^2_{\infty}} + \textsc{H.O.T.}
\end{equation}
  
Here the last step follows from the Taylor series expansion of the non-linearity and higher order terms (H.O.T.) continue on odd powers. Naturally, such a deformation of transmitted symbols changes the errors suffered from the modified communication pipeline. The mean-square error, up to second moments are modified as:

\begin{equation}
    E(|\tilde{f}[n]-\hat{f}_o[n]|^2) \approx E(|m[n]|^2)((q_f^{*}\boldsymbol{H}\boldsymbol{H}^*q_f+q_b^{*}q_b)(1-\frac{1}{\alpha^4})+1-2 \Re\{(q_f^{*}\boldsymbol{H}_1^{L_b-n_0-1}{q_b}_{n_0+2}^{L_b}-q_f^{*}\boldsymbol{H}_{n-n_0})\}),
\end{equation}
  
For amplifiers with more reliable effective ranges ($\alpha \gg 1$), the error expression converges to that of \eqref{eq:ditheroutout}.

In this section, we introduced field data reuse techniques that insert two additional elements into a standard signal transmission flow. First, by applying an XOR with an arbitrarily chosen binary sequence $s[n]$ to the transmitted bit sequence $p[n]$, we can validate the performance of virtual systems using practically any FEC scheme. Second, we extended this concept to incorporate different symbol mappings by inserting a dither addition stage during modulation. In Table 1, these signal preparation techniques are summarized, pointing out how to properly design them and under what conditions they can be applied to predict system performance. 

\begin{table}[H]
\centering
\captionsetup{justification=centering}
\caption{Summary of different signal preparation techniques for system evaluation with different coding and modulation schemes.}
\label{my-label}
\begin{tabular}{|m{2.8cm}|m{4.5cm}|m{4.5cm}|m{4.5cm}|}
\cline{1-4}
& \multicolumn{1}{c|}{XOR sequence} & \multicolumn{2}{c|}{Dither insertion} \\ \hline \hline
\multicolumn{1}{|l|}{Design}    & $s[n] = d_2[n] 
\text{''XNOR''} d[n]$   & $m[n] = g[n]-f_2[n]$ with minimal $l_2$ norm           & $m[n] = g[n]-f_2[n]$ with $E_{m,k}=0$            \\ \hline
\multicolumn{1}{|l|}{Assumptions} & $p[n]$ white & end-to-end linearity & 1. end-to-end linearity     $\,\,\,\,$   2. $E(|g[n]|^2) = E(|f[n]|^2)$ 3. Normal approximation (see Lemma 1)\\ \hline
\multicolumn{1}{|l|}{Purpose}  & Test different error-correction encoding and interleaving schemes  & Test different symbol mappings, and minimize \textbf{mean square deviation} between  virtual and actual experimental results     & Test different symbol mappings, and minimize \textbf{KL divergence} between virtual and actual experimental results          \\ \hline
\end{tabular}
\end{table}

\section{Numerical Examples}
%
To illustrate the proposed post-experimental evaluation scheme, data transmission through the structure illustrated in Figure 3 was first simulated via a synthetic channel. The baseband channel impulse response was modeled with a 5-tap filter, having a weights $ [1, 0.9472, 0.4586, 0.4315, 0.1497]^T$, and additive white Gaussian noise (AWGN) was assumed. This channel was simulated by switching modulation schemes between QPSK and 16QAM and varying SNR. When switching between modulation schemes, the same realization of AWGN was applied, since the data transmission was assumed conducted only once for a single symbol mapping; as illustrated earlier in Figure \ref{fig:dither}. Filters for the DFE were designed using three different criteria: zero-forcing (ZF), minimum mean squared error (MMSE), and the least-mean-square (LMS) algorithm. When designing the ZF-DFE and MMSE-DFE, perfect channel knowledge was assumed for illustration purposes.

Two situations which happen frequently in field experiments include collecting data for which the transmitted signal constellation is either too aggressive for the experimental conditions, and the data is unrecoverable, and collecting data under too conservative signal parameters that fail to challenge the underlying algorithms. In both of these situations, postmortem data analysis would have indicated a desire for additional transmissions with either a higher-order or a lower-order transmit signal constellation.  Use of the XOR dither sequence can enable other FEC strategies to be employed, which may enable recovery of data transmitted under constellations too aggressive for the data rates anticipated. However this may not be sufficient to overcome the environmental conditions. Similarly, FEC rate changes cannot be more aggressive than uncoded transmission, and hence fail to help with environments that proved too benign.  One approach that is often employed in this situation is available when a multi-channel array receiver is used. In this case, receiver strategies using only a subset, or a single receiver element may be employed. Similarly, recordings of environmental noise can be added to reduce the SNR of the environment.  In both of these situations, the proposed additive dither method provides accurate performance predictions without the need for collecting an exhaustive set of transmit symbol constellations and space-time transmissions to be captured during the experiment.

For this example, two different dither schemes were adopted for various equalizer structures, derived from the previous section. Here, $g[n]$ denotes the transmitted sequence in the experiment. Then the desired sequence $f[n]$ and dither $m[n]$ can be chosen as follows. When the constellation mapping for $g[n]$, $\bold{MAP_g}$, is 16QAM, and that for $f[n]$, $\bold{MAP_f}$ is assumed to be QPSK, the nearest QPSK point to $g[n]$ can be selected as $f[n]$ following the rule shown in Figure \ref{fig:dither} (a), and we call the dither sequence generated from this rule as $m_{1, 16\rightarrow 4}$. However, as illustrated in Figure \ref{fig:dither} (b), it is also possible to select the constellation point furthest from the original location, and the corresponding dither is referred as $m_{2, 16\rightarrow 4}$. In the other situation, where the mappings of $g[n]$ and $f[n]$ are reversed, if we allocate only a single point to each constellation location in $g[n]$, without changing the underlying rate of the system, then it would result that only 4 symbols from 16QAM are used, effectively maintaining QPSK. Therefore, we can either group two QPSK symbols into a single 16 QAM symbol, thus reducing the transmit sequence length in half, or, for illustration purposes, we can randomly choose among the 4 nearest 16QAM points to the QPSK symbols $g[n]$ with equal probability. The index of this selection corresponds to an additional 2 bits of data that could be conveyed in the transmission and this bit sequence can similarly be mapped to any desired transmit sequence of bits through the XOR dither described previously. In either case, we call the corresponding dither as $m_{1, 4\rightarrow 16}$. Finally, $m_{2, 4\rightarrow 16}$ refers the sequence which maps $g[n]$ among 4 furthest 16QAM points.

\subsection{Proof of Concept}
\subsubsection{Simulating QPSK transmission from a failed 16QAM transmission}
In the attempt to send a 16QAM signal through the synthetic channel using the MMSE-DFE structure, it was shown that 10-dB transmit SNR was not sufficient to reliably recover the data. As illustrated in Figure \ref{fig:10db} (a), the equalizer output shows a failure in restoring transmitted symbols. Through additive dither, this 16 QAM transmission was mapped to a QPSK transmission and subsequently processed using the MMSE DFE. The results are shown in Figure \ref{fig:10db} (b). Note that although the QPSK constellation that is constructed from the transmitted 16 QAM transmission shows a slight bias, the resulting equalized constellation closely mirrors that which would have been obtained through transmission of a QPSK constellation through the channel directly.

\begin{figure}[h]
	\centering
	\includegraphics[scale=0.8]{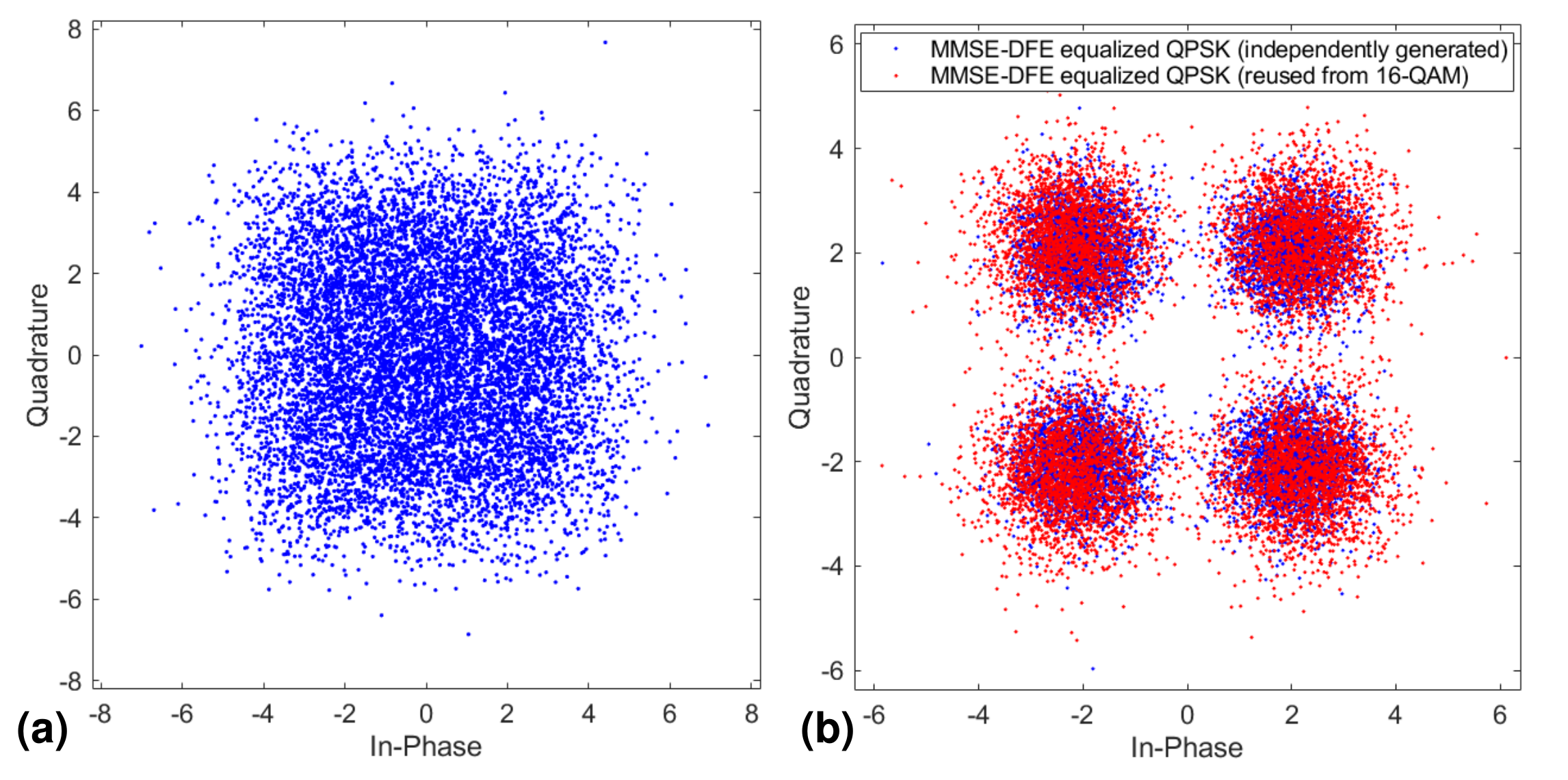}
	\caption{(a) The output constellation from 10 dB 16QAM, (b) the 10 dB QPSK output reproduced from (a) with the proposed method and comparison to the independently generated QPSK output without dither.\label{fig:10db}}
\end{figure}

\subsubsection{Simulating 16QAM transmission from a successful QPSK transmission}

\begin{figure}[h]
	\centering
	\includegraphics[scale=0.8]{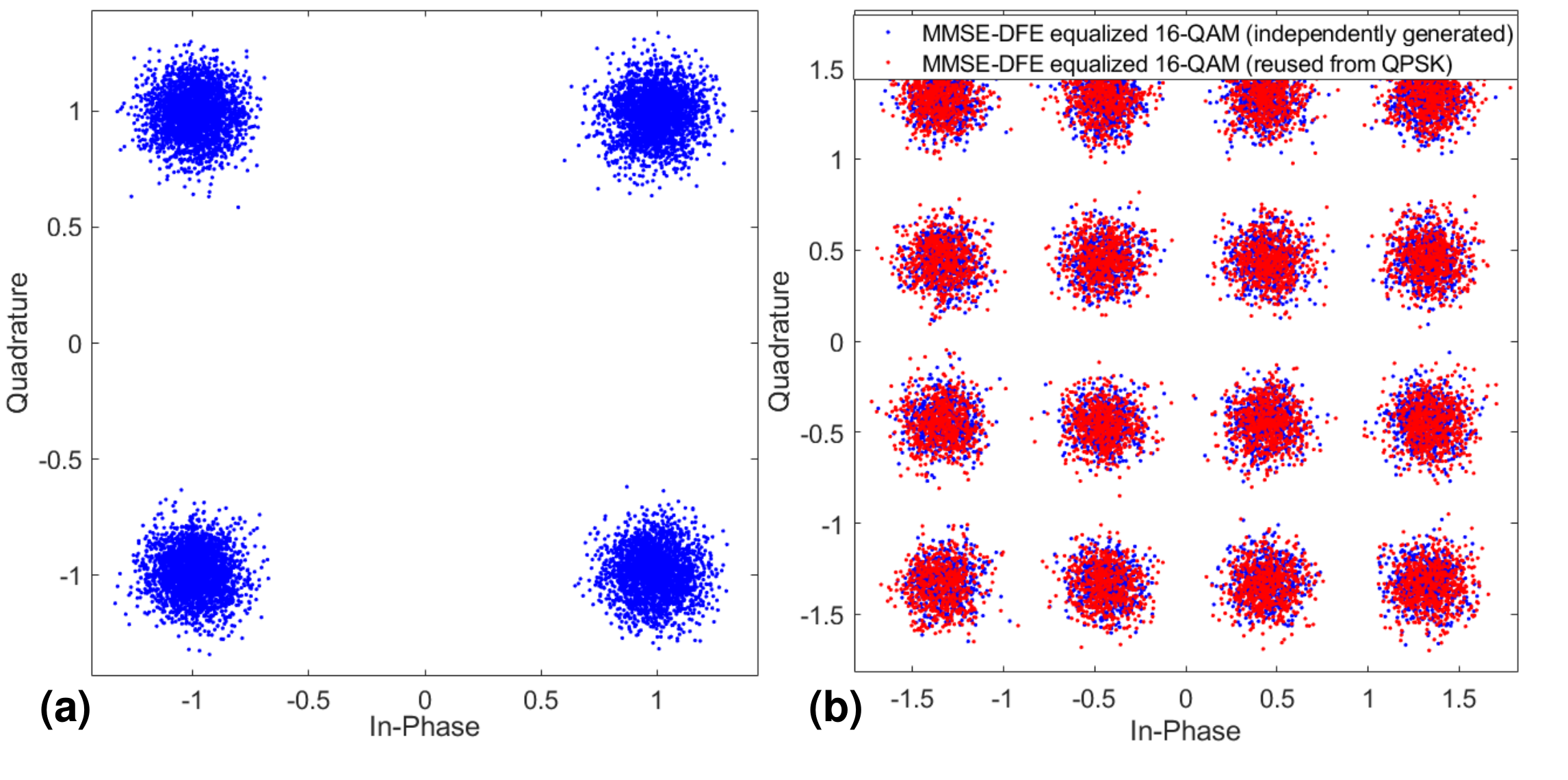}
	\caption{(a) The output constellation from 20-dB QPSK, (b) the 20-dB 16QAM output reproduced from (a) with the proposed method and its comparison to the independently generated 16QAM output without dither additions.\label{fig:20db}}
\end{figure}

In the attempt to send a QPSK signal through the synthetic channel using the MMSE-DFE structure, we see that 20 dB transmit SNR is overly conservative and it is clear that a higher-order constellation, such as 16 QAM could have been successful. As illustrated in Figure \ref{fig:20db} (a), the equalizer output is closely-grouped around the QPSK constellation. Through additive dither, this QPSK transmission was mapped to a 16 QAM transmission and subsequently processed using the MMSE DFE. The results are shown in Figure \ref{fig:20db} (b). Note that the resulting equalized constellation closely mirrors that which would have been obtained through transmission of a 16QAM constellation through the channel directly.

\subsection{BER versus SNR Prediction}

By sweeping SNR from 5 dB to 20 dB, bit error rates of the equalized sequences were obtained for both those reproduced via dither and independently generated reference transmissions. Both ZF-DFE and MMSE-DFE were used to equalize the received symbols. As can be seen from Figures \ref{fig:fig7} and \ref{fig:fig8}, the MMSE scheme outperforms the ZF-DFE in terms of mean squared error. However, the MMSE-DFE is essentially a biased estimator, i.e., output estimation errors have a non-zero offset. As discussed previously, this implies an increase in statistical distance as seen in equation of \eqref{eq:11}. Hence, the MMSE-DFE structure may be more prone to poor dither sequences than the ZF-DFE. This behavior is seen in Table 2, where empirically determined MSD and KL-divergence are shown between the output, $\hat{f}[n]$, and the reference output, $\hat{f}_o[n]$ for an example with 18dB  SNR. We  see not only that $D_{\text{KL}}$ is larger in MMSE structures than in ZF structures, but also, when a mapping scheme with larger $E(|m|^2)$ and $E_{m,k}$ was used, the distance increases more than that for the ZF-DFE. In addition, we observe larger MSD values in the MMSE-DFE than in the ZF-DFE. However, this phenomenon is not due to the bias. The multiplier in \eqref{eq:ditheroutout}, which determines MSD for a fixed dither magnitude, can be interpreted as optimal interference plus noise from the equalizer when zero noise is assumed, i.e., this is an exact condition for the ZF-DFE.

\begin{table}[H]
\centering
\captionsetup{justification=centering}
\caption{Empirically Computed Distances between the reproduced and the reference output for the transmit signal with 18dB SNR.}
\label{my-label}
\begin{tabular}{|P{3cm}|P{2cm}|P{2cm}|P{2cm}|P{2cm}|}
\hline
\multicolumn{1}{|c|}{\multirow{2}{*}{}} & \multicolumn{2}{l|}{\hspace{1.8cm}MSD} & \multicolumn{2}{l|}{ \hspace{1.8cm} $D_{\text{KL}}$} \\ \cline{2-5} 
\multicolumn{1}{|c|}{}                  & ZF-DFE    & MMSE-DFE           & ZF-DFE          &MMSE-DFE           \\ \hline \hline
$m_{1, 16\rightarrow 4}$(optimal)                 &1.62E-06  &4.74E-04 &3.84E-11    & 3.81E-06          \\ \hline
$m_{2, 16\rightarrow 4}$(poor)                 &1.57E-05  &4.41E-03   &5.11E-10   & 2.52E-04          \\ \hline
$m_{1, 4\rightarrow 16}$(optimal)                 &1.64E-06  &4.84E-04 & 4.42E-11   &  3.07E-06  \\ \hline
$m_{2, 4\rightarrow 16}$(poor)                 &1.64E-05  &4.62E-03   &1.4032E-10 & 2.70E-04   \\ \hline
\end{tabular}
\end{table}

Observations from Figures \ref{fig:fig7} and \ref{fig:fig8} also support this analysis. When the dither sequence $m_{1, \cdot \rightarrow \cdot}$, which is in fact 'optimal' in both the MSD and $D_{\text{KL}}$ sense, the prediction obtained from the post-experimental analysis accurately fits the results from the desired reference signal for both ZF and MMSE structures. However, when the 'poor' mapping, $m_{2, \cdot \rightarrow \cdot}$ was used, the BER prediction of MMSE-DFE started to deviate from the reference, while the prediction for the ZF-DFE remained valid. 

\begin{figure}[H]
	\centering
	\includegraphics[scale=0.8]{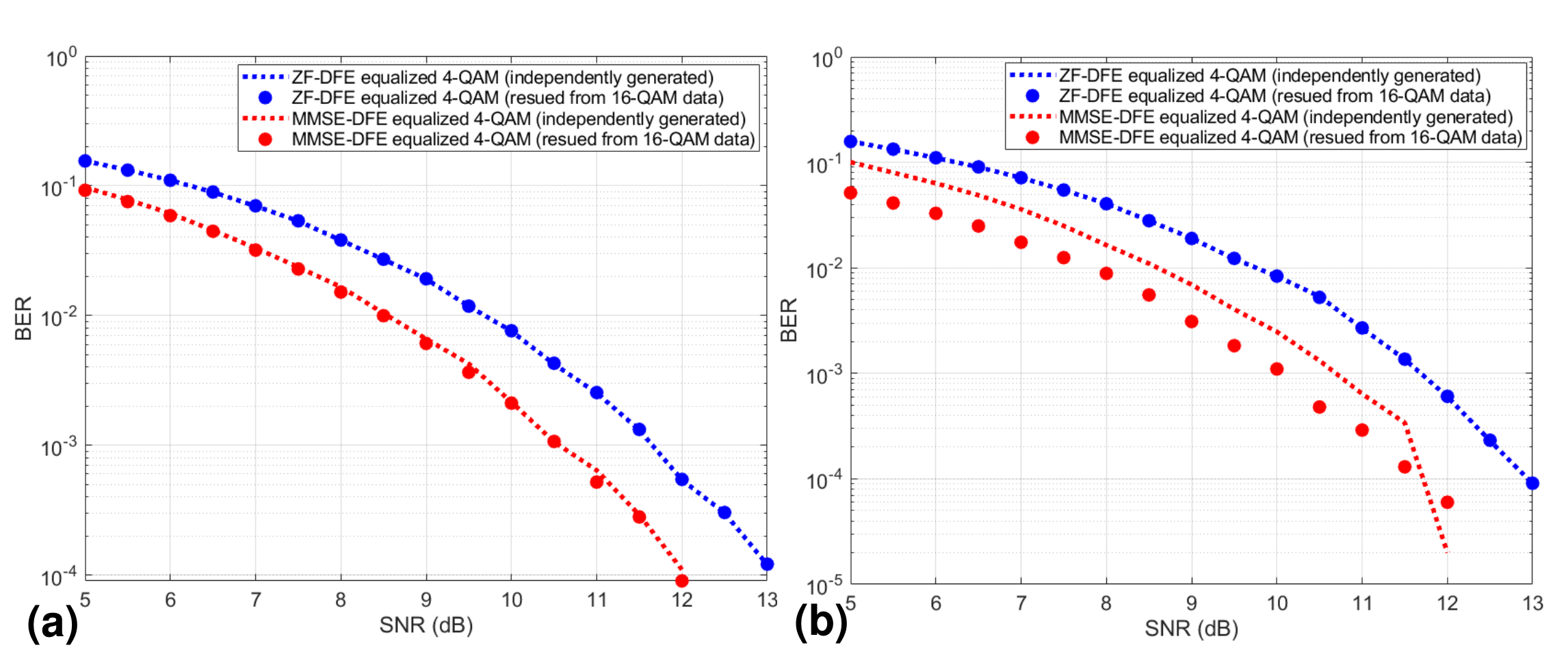}
	\caption{Comparison of transmit SNR vs. BER between independently generated QPSK signals and one reproduced from 16QAM data; two different dither sequences, (a) $m_{1,16\rightarrow 4}$ (optimal) and (b) $m_{2,16\rightarrow 4}$ (poor) were each inserted to reproduce QPSK signals from 16QAM data; ZF-DFE (blue) and MMSE-DFE (red) were used to equalize intersymbol interference. \label{fig:fig7}}
\end{figure}

\begin{figure}[H]
	\centering
	\includegraphics[scale=0.8]{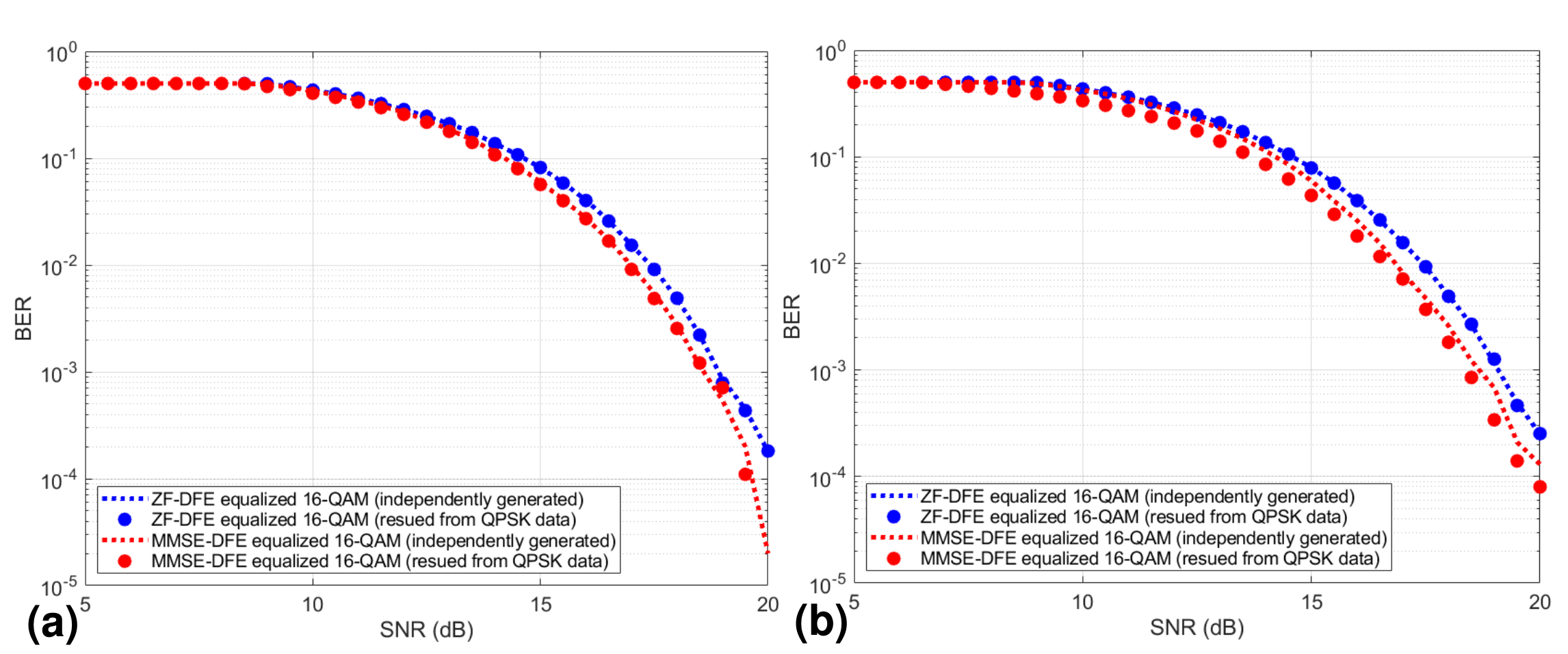}
	\caption{Comparison of transmit SNR vs. BER between independently generated 16QAM signals and one reproduced from QPSK data; two different dither sequences, (a) $m_{1,4\rightarrow 16} $ (optimal) and (b) $m_{2,4\rightarrow 16}$ (poor) were each inserted to reproduce QPSK signals from 16QAM data; ZF-DFE (blue) and MMSE-DFE (red) were used to equalize inter symbol interference.\label{fig:fig8}}
\end{figure}

\subsection{Adaptive Implementations}
While the above analysis assumed a fixed receiver, most underwater acoustic communication systems \cite{riedl13} use a form of adaptive equalization. In this section, we show that the dither insertion approach can be used in adaptive systems.

To illustrate, an LMS-DFE was implemented for a data set for which the first 3,000 transmit symbols were used as training data. As in previous simulations, by reusing 20dB SNR 16QAM data sent over the synthetic channel, using dither $m_{1, 4\rightarrow 16}$, we emulate the output of 20dB QPSK signal transmission at an LMS-DFE receiver. A standard LMS algorithm with two different step sizes $\mu=0.001, 0.005$, were chosen. It should be noted that the error symbols used in LMS updates are computed through dither addition, while the input to the filter remains immutable.  For comparison, 20dB QPSK transmissions without dither were also performed as a reference.
Not only do the outputs after convergence track the reference output, as illustrated in Figure \ref{fig:LMSLMS} (a), but it also simulates the convergence behavior of the adaptive algorithm; two different rates were well-predicted using the proposed dither method.

\begin{figure}[h]
	\centering
	\includegraphics[scale=0.5]{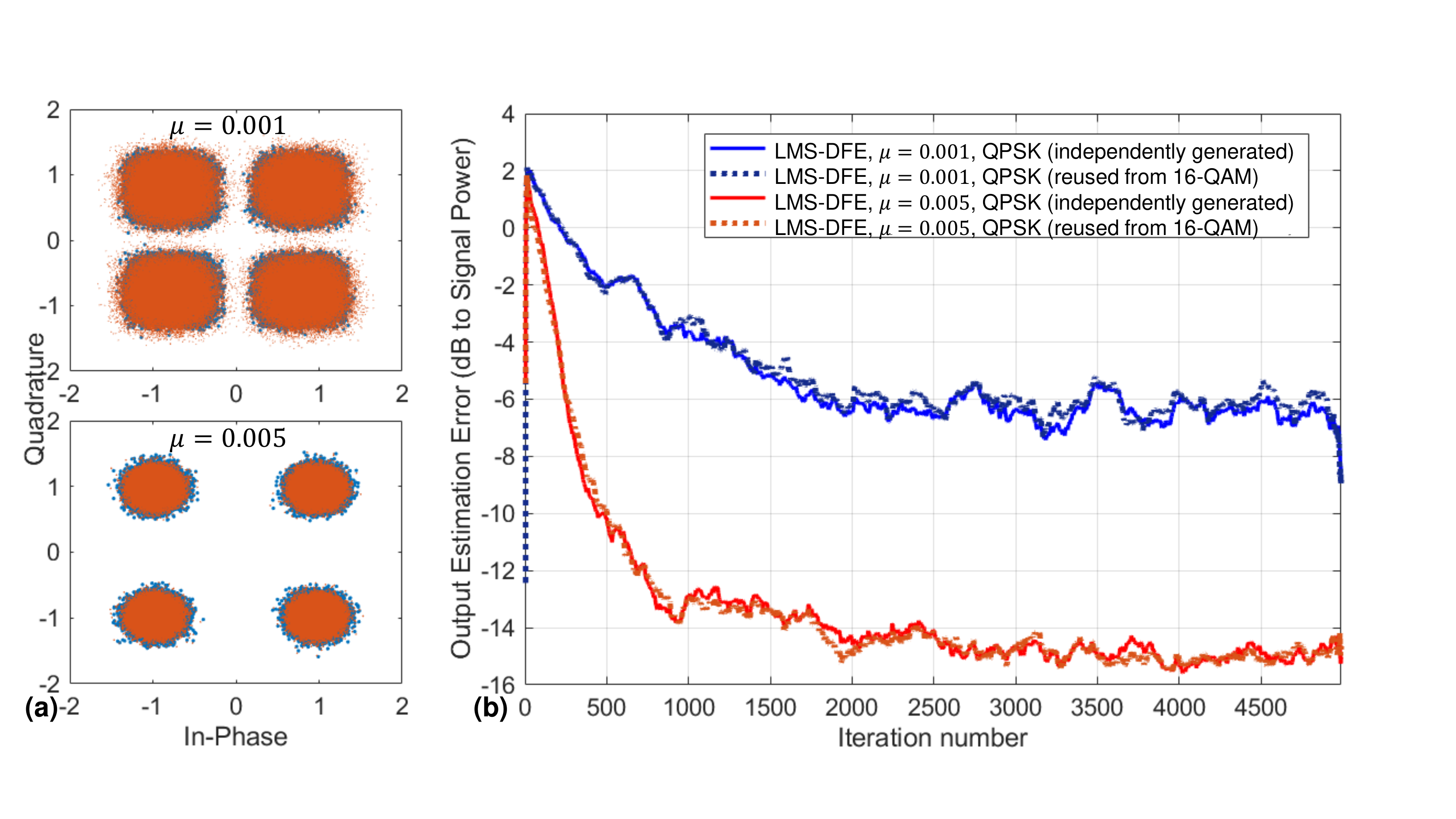}
	\caption{(a) Output constellation for independently-generated QPSK transmission (red) and for dither-generated 16QAM transmission. (b) Learning curves for LMS-DFE with step sizes $\mu=0.001$ (top) and $\mu=0.005$.\label{fig:LMSLMS}}
\end{figure}

\section{Post-experimental Reproduction of MACE 2010 Dataset}

The proposed method using additive dither was tested on experimental data from the Mobile Acoustic Communication Experiment (MACE) conducted in 2010. The experiment was conducted 100km south of Martha's Vineyard, MA. A V-fin with an array of transducers was towed around a ``race track'' configuration of approximate size 600m by 3.8km. The maximum tow speed was around 3kt(1.5m/s) and tow depth varied between 30m and 60m. The signal was transmitted from the towed array and received from a 12 channel hydrophone array moored at a depth of 50m, approximately 5 km away from the ``race track.''

The transmission was single input, multiple output (SIMO) with 1 transmit and 12 receive elements, but only 4 hydrophones at the receiver were used in this example, i.e. $M_r = 4$. Transmit constellations of both QPSK and 16 QAM were used at 9.765625 ksps at a carrier frequency of 13 kHz and receive sample rate of 39.0625 ksps. Channel impairments include time-varying delay and Dopper spread, for which a recursive least square (RLS) DFE was used, with 10000 training symbols for initialization. Detailed explanation of the receiver structure is given in Appendix B. 
\begin{figure}[H]
	\centering
	\includegraphics[scale=0.5]{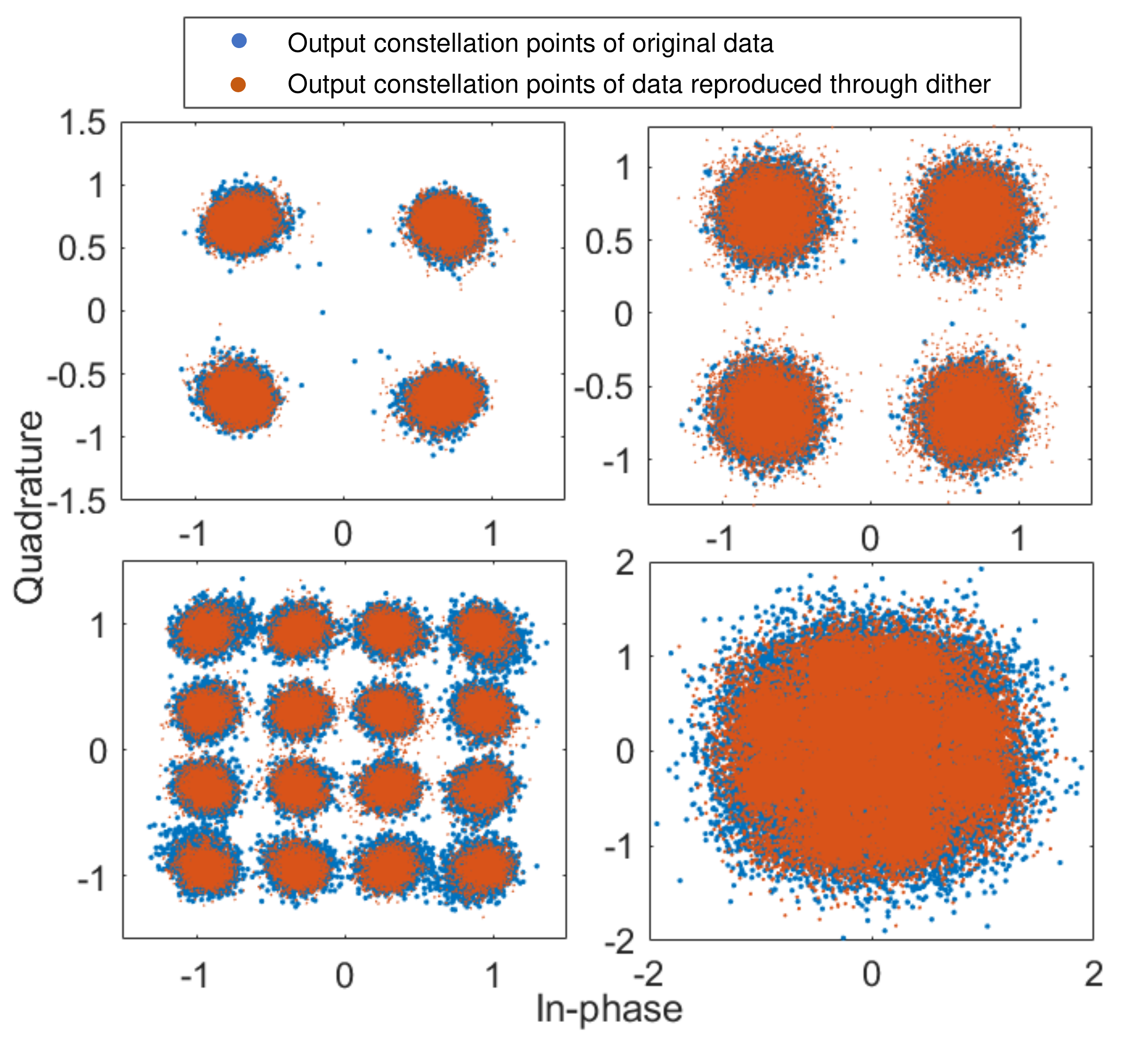}
	\caption{RLS-DFE equalized symbol constellations from the original MACE 2010 data are shown blue, with symbol constellations generated from dither shown in red: (top left) SIMO, $M_r = 4$ QPSK, (top right) SISO QPSK, (bottom left) SIMO, $M_r = 4$ 16QAM, and (bottom right) SISO 16QAM. \label{fig:MACE_process}}
\end{figure}

The purpose of this section is to illustrate the validity of the proposed dither-based post-experimental method, showing that data reproduced through dither insertion closely tracks the behavior of at-sea experimental channels and systems. Among the MACE 2010 data, two sets of transmissions (one with QPSK and one with 16QAM) were used to produce dither simulations. Since the symbol rates were the same for both sets, a QPSK data set was created using a dither from the 16 QAM data, and a 16 QAM data set was created using a dither from the QPSK data.  Both SIMO receivers, using 4 receive hydrophones and SISO receivers, using a single hydrophone, were examined.  These 4 dither-constructed signals used the $m_{1, \cdot \rightarrow \cdot}$ dither. Also, to validate the accuracy of the dither-based method, we compared our results not only to the original experimental data, but also to the data generated from the traditional channel-playback method. To generate the playback output, RLS-based adaptive channel estimation were used to track the time-varying impulse responses, which are modeled in 800 taps in the baseband regime. Also, time-varying Doppler estimates captured from a phased locked loop (PLL) at the receiver were used to simulate the Doppler spreads. Noise recordings measured separately from the same environment were used to estimate the noise power, and ambient noise in the playback simulation was generated as additive and Gaussian. 

\begin{figure}[H]
	\centering
	\includegraphics[scale=0.6]{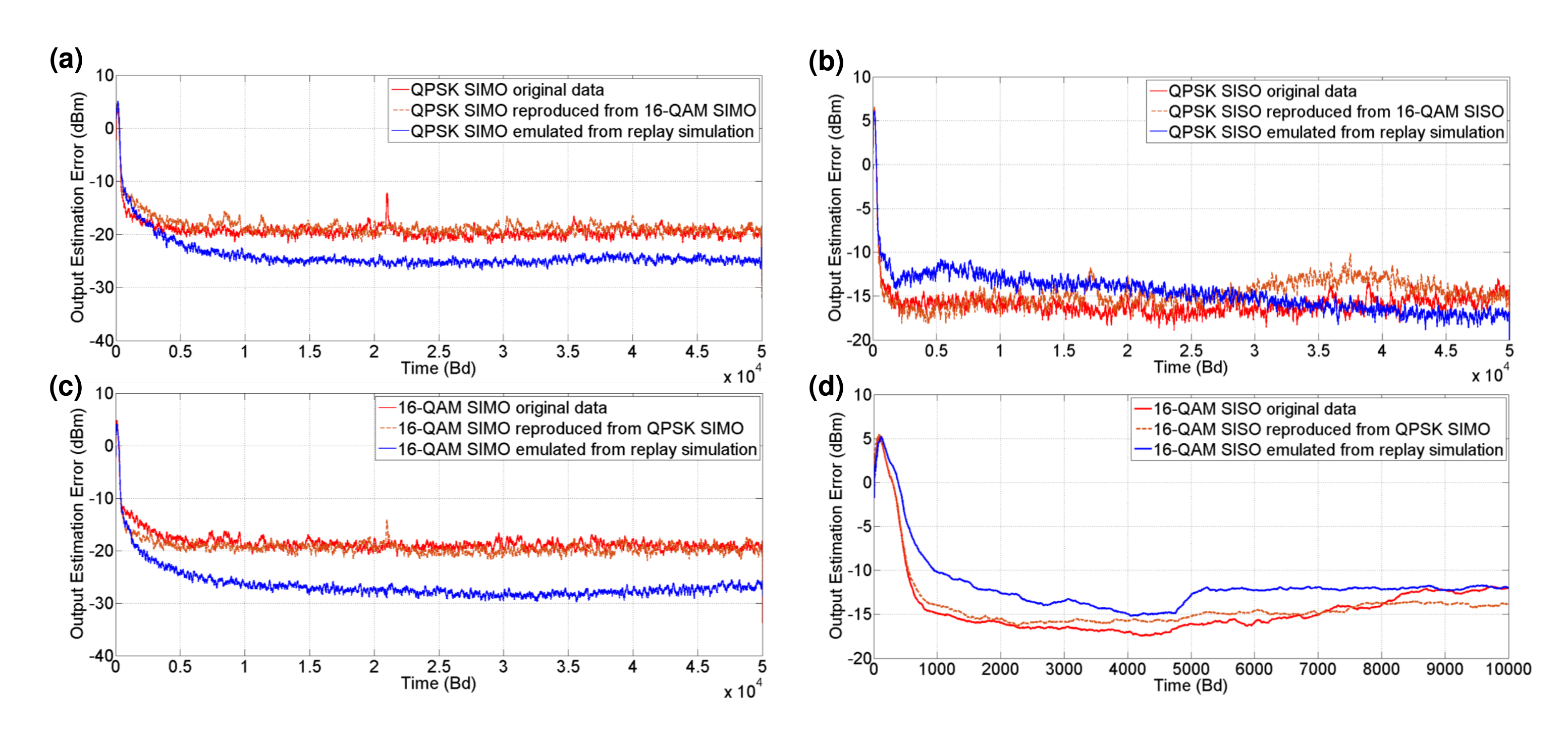}
	\caption{MSE learning rates for the RLS-DFE: (a) Comparison among the original QPSK data, one reproduced through dither, and the replay-simulated data based on channel estimations from 16 QAM recordings, for $1 \times 4$ SIMO receivers and (b) a single SISO receiver. Similarly 16 QAM comparisons are expressed in (c) SIMO and (d) SISO receivers. \label{fig:RLS_converge}}
\end{figure}

Figure \ref{fig:MACE_process} illustrates both SIMO and SISO results, with constellation points well-aligned to the non-dither transmit constellations. One interesting consideration is the case of 16QAM SISO transmission, for which we can see that even the failure of the transmission can be predicted from the proposed method. Note that while this does not imply that the distribution of the output can be accurately matched by dither simulation, the mean squared error can be reasonably-well predicted even after equalizer failure, as shown in Figure \ref{fig:MACE_process}. We also see that the RLS convergence speed (learning curve) is well-predicted by the dither simulation in Figure \ref{fig:RLS_converge}. For example, while the channel state in the QPSK SIMO original data is different from that in the 16QAM SIMO data in Figure \ref{fig:MACE_process} (a), their learning curves largely track one another, since these two data transmissions were conducted within a short period of time, in same location. Once again, even though the SISO 16 QAM data sets failed to converge after training, (and hence are not shown beyond the training period), the leanring curves for both the original 16 QAM data and that for the dither-generated 16 QAM data show similar learning curves through the training stage. While data reproduced from our suggested method show accurate tracking to the original experiments, it can also be observed that performance predictions from the conventional channel playback simulator are often overestimated more than 5 dB, as illustrated in Figure \ref{fig:RLS_converge} (a) and (c). There are a few reasons for this. First, the channel equalization algorithms are inherently smoothing channel fading effects, which may be faster than that estimated from the data. Second, as we explore in the next section, direct channel playback as employed here assumes that the channel estimates are correct, and playback simulations fail to adequately capture unmodeled channel estimation error.

\section{Direct Channel Playback Simulation}

\indent A second approach to using data from at-sea experiments to test new transmit signal modulation and coding techniques and signal processing algorithms is the use of direct replay simulators.  A study of the use of such simulators is contained in \cite{otnes2013validation}.  
In concept, channel probes are transmitted through the ocean channel with the received signal used to estimate the acoustic channel's time-varying impulse response (TVIR). In addition, ambient noise signals are collected at times when the channel probes are not transmitted.  During direct playback simulation, the communications signals to be tested are convolved with the estimates of the TVIR, ambient noise is added to achieve a desired SNR and the resulting signal is processed to evaluate either the candidate transmitted signal or signal processing algorithms.  It is important that the bandwidth of the channel probe signal encompass at least the entire transmission band of the communications signal. The ambient noise can be added either using the directly measured noise or noise generated to have similar statistics, but this is not the focus of this section of the paper.

A challenge of direct playback simulation arises from for the time-variability of the channel.  Here, the discussion focuses on communications paradigms using channel equalization. Specifically, compensating for channel dynamics is one of the key challenges in signal demodulation, whether or not it involves directly estimating the TVIR.  Thus, it is important that the dynamics of the channel used in direct playback simulation represent the channel dynamics in the actual channel.  However, channel dynamics will also impact the efficacy of the estimated TVIR with more rapidly-varying channel features being more difficult to estimate accurately than more slowly-varying features.  Thus, the very features of the channel that impede reliable communications may not be captured in the estimates of the TVIR.  The approach proposed here is intended to address this challenge and make direct playback simulation more useful in rapidly varying conditions.

The approach here uses the error metric created by a channel estimation algorithm to estimate and account for the impact of channel estimation errors in a channel playback simulator. By subtracting a prediction of the received signal, based on the current channel estimate and the transmitted signal, from the actual received signal, the residual prediction error (RPE) provides a convenient measure of the impact of the excess channel modeling error on performance prediction based on channel replay simulation. 

The channel model is that of the complex baseband channel between the transmitted sequence ($g[n]$ in Figure 1) and the sampled complex baseband received signal ($y[n]$ in Figure 2).  This specifically incorporates the transmit pulse shaping and modulation to passband of the channel probe signal into the channel model.  These may be aspects of the signal generation that a user would like to investigate.  In this case, the extension of the proposed approach to provide this flexibility is straightforward but is not addressed herein. The baseband channel model is given by
 
\begin{equation}
    y_k[n] = \sum_{l=0}^{L-1}h_{k}^{*}[n,l]g[n-l]+v[n],
\end{equation}
  
where $k$,$k=1,⋯,K$ is the hydrophone number for a $K$-element receiver array, $L$ is the delay spread of the TVIR, the superscript $*$ denotes complex conjugate and $v[n]$ is the complex-baseband received noise.  Based upon this model, the estimate of the complex baseband TVIR, denoted as $\hat{h}_k[n,l]$, is estimated as the solution to the least-squares problem
 
\begin{equation}
    \hat{h}_k[n,l] = \argmin_{h_k}\sum_{n'\leq n}\bigg|{y_k[n']-\sum_{l'=0}^{L-1}h_{k}^{*}[n',l']g[n'-l']}\bigg|^2.
\end{equation}
  
Here, the summation of $n'$ is over the averaging window of the least-squares estimator.  There are many modifications of this estimation approach including exponential windowing and the incorporation of explicit models of the channel time variability to improve the algorithm’s ability to estimate a TVIR. These algorithms can be used in place of the simple algorithm shown here without changing the validity of the direct playback simulation approach proposed here.

Define the {\it{a priori}} signal estimate as $\hat{y}_k[n|n-1] \triangleq \sum_{l=0}^{L-1}h_{k}^{*}[n-1,l]g[n-l]$  and the least-squares residual signal prediction error as $e_k[n|n-1]\triangleq y_k[n]-\hat{y}_k[n|n-1]$. The notation $[n|n-1]$ indicates that the TVIR estimate from time $n-1$ is used to predict the received signal at time $n$.  Define the corresponding {\it{a priori channel}} estimation error as $\epsilon_k[n,l]\triangleq h_k[n,l]- \hat{h}_k[n-1,l]$. Then the true received signal can be written as
 
\begin{equation}
    y_k[n] = \hat{y}_k[n|n-1]+\sum_{l=0}^{L-1}\epsilon_{k}^{*}[n,l]g[n-l]+v[n].
\end{equation}

The first term is the output of the estimated channel given the transmitted signal, $g[n]$, the second term is the additional portion of the received signal resulting from channel estimation errors and the final term is the observation noise.  The last two terms comprise $e_k[n|n-1]$.  Under the mild assumption that the channel estimate is an MMSE estimate, the second term is independent from the first. This residual prediction error can be calculated directly from the channel estimation algorithm.  

In \cite{preisig2005performance}
it was shown that appropriately characterizing $e_k[n|n-1]$ and incorporating this in the calculation of channel estimate-based equalizer filter weights improves the performance of the resulting equalizer.  Here it is proposed to use the characterization of $e_k[n|n-1]$ to enable direct channel playback simulators to appropriately compensate for channel estimation errors. When the channel probe signal is received at a reasonably high SNR, $e_k[n|n-1]$ should be dominated by the channel estimation error induced signal rather than the observation noise.  This is assumed here.

Let $\boldsymbol{e}[n|n-1]=[e_1[n|n-1],\cdots ,e_K[n|n-1]]^T$ and similarly define $\hat{\boldsymbol{y}}[n|n-1]$ and $\boldsymbol{y}[n]$. From the channel estimation algorithm, calculate $\boldsymbol{e}[n|n-1]=\boldsymbol{y}[n]-\hat{\boldsymbol{y}}[n|n-1]$.  From the time series of $\boldsymbol{e}[n|n-1]$, estimate the spatial-temporal correlation matrix given by $\boldsymbol{R}_e[m]\triangleq E[\boldsymbol{e}[n|n-1] \boldsymbol{e}^H [n-m|n-1-m]]$, where the superscript $H$ denotes Hermitian and temporal stationarity of the time series is assumed.  For these data, calculate the variance of the transmitted sequence, $g[n]$, denoted $\sigma_g^2$. For simplicity, we assume that if the transmit array has more than 1 element ($M>1$), then the channel probe signal is uncorrelated from element-to-element and has the same variance at each element.  It is straightforward to modify the approach if this is not the case.

Denote the transmitted communications signal to be used in the direct playback simulator by $\tilde{g}[n]$ and denote its variance by $\tilde{\sigma}_g^2$. First, generate a complex normal time-series, ${\tilde{\boldsymbol{e}}}[n]$, with spatial-temporal correlation given by $(\tilde{\sigma}_g^2/\sigma_g^2)\boldsymbol{R}_e[m]$ .  Finally, generate additive noise, $\tilde{\boldsymbol{v}}[n]$, using either prerecorded noise segments or via statistics that match the recorded ambient noise as would be done in a standard direct playback simulator.  Finally, generate the simulated received time series as

\begin{equation}
    \boldsymbol{\tilde{y}} = \sum_{l=0}^{L-1}\hat{\boldsymbol{h}}^{*}[n-1,l]\tilde{g}[n-l]+\tilde{\boldsymbol{e}}[n]+\tilde{\boldsymbol{v}}[n].
\end{equation}
  
As constructed here, the simulator output consists of a sum of the test signal propagated through the estimated channel, the modeled or recorded observation noise, and an additional residual prediction error (RPE) term, that is the part of the received signal corresponding to difference between signal propagating through the true channel and the estimated channel. The use of ``RPE'' in the simulator is the proposed modification to direct playback simulation to account for channel estimation errors.

\subsection{Experimental Data from KAM11}

Shown in Fig.~\ref{fig:KAM11}, are data collected from the KAM11 data collection experiment. On the left, is the estimated channel impulse response over time, illustrated in dB relative to the signal peak at each time slot.  The received passband communication signal power versus time and the pass band received noise power (measured during a silence interval) are also shown on a dB scale. From the center (communication signal power) and right (noise power) figures, it can be seen that the SNR is in excess of 25dB for this measurement. It is also evident from the noise power estimates, that there is frequent impulsive noise in the environment. This data set was collected with a 4-element vertical array, with a 10cm element spacing. The transmit center frequency was 13 kHz, and data was transmitted at 6250 symbols per second, using a BPSK modulated m-sequence continuously repeated for 50 seconds.  The source to receiver range was 3 km in 100 m depth of water with the source and receiver at 15 m depth, enabling surface dynamics to play a role in the channel structure. For this set of channel measurements, during direct channel playback simulation, the residual prediction error (RPE) is due primarily to errors in estimating channel impulse response.

\begin{figure}[h]
	\centering
	\includegraphics[scale=0.8]{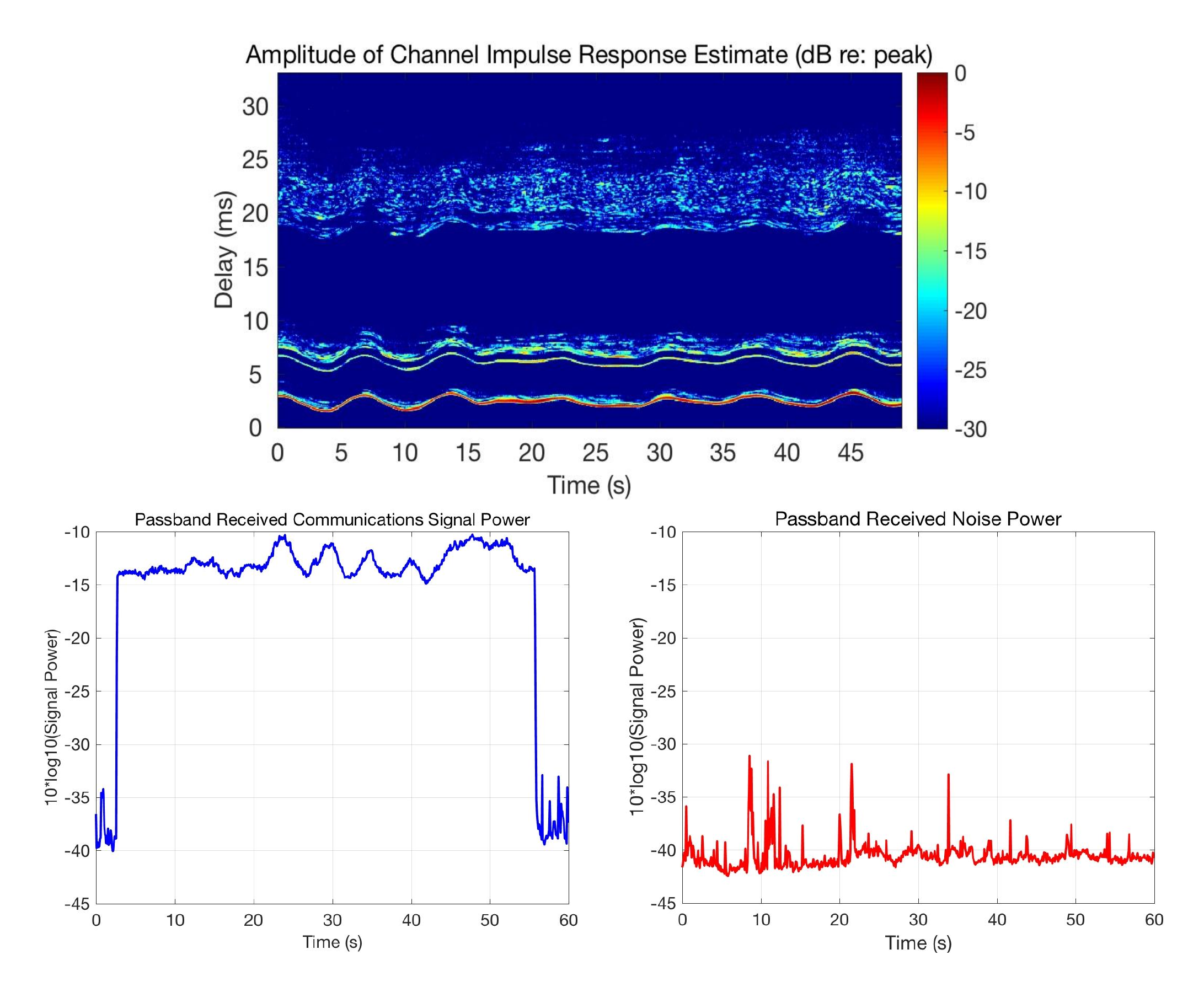}
	\caption{KAM11 passband communication data set collected in 100 m depth with source and receiver at 15 m from the surface, separated by 3km. Shown are (left) impulse response measurements, (center) received passband signal power, and (right) passband noise power.\label{fig:KAM11}}
\end{figure}

Figure \ref{fig:DCPRPEpower} illustrates a comparison between the performance predictions of direct channel playback with and without additional residual prediction error in the simulation.  As can be evidenced by the lower two lines in the figure, the residual prediction error of the channel estimator is at the same power level as the noise-free output of the direct channel playback simulator. The upper two plots illustrate the measured communication signal power along with that from direct channel playback with RPE added to the output. The dominant source of error between noise-free direct channel playback and the received signal power aligns well with the measured RPE, indicating that RPE indeed dominates system performance predictions for this data set.

\begin{figure}[h]
	\centering
	\includegraphics[scale=0.6]{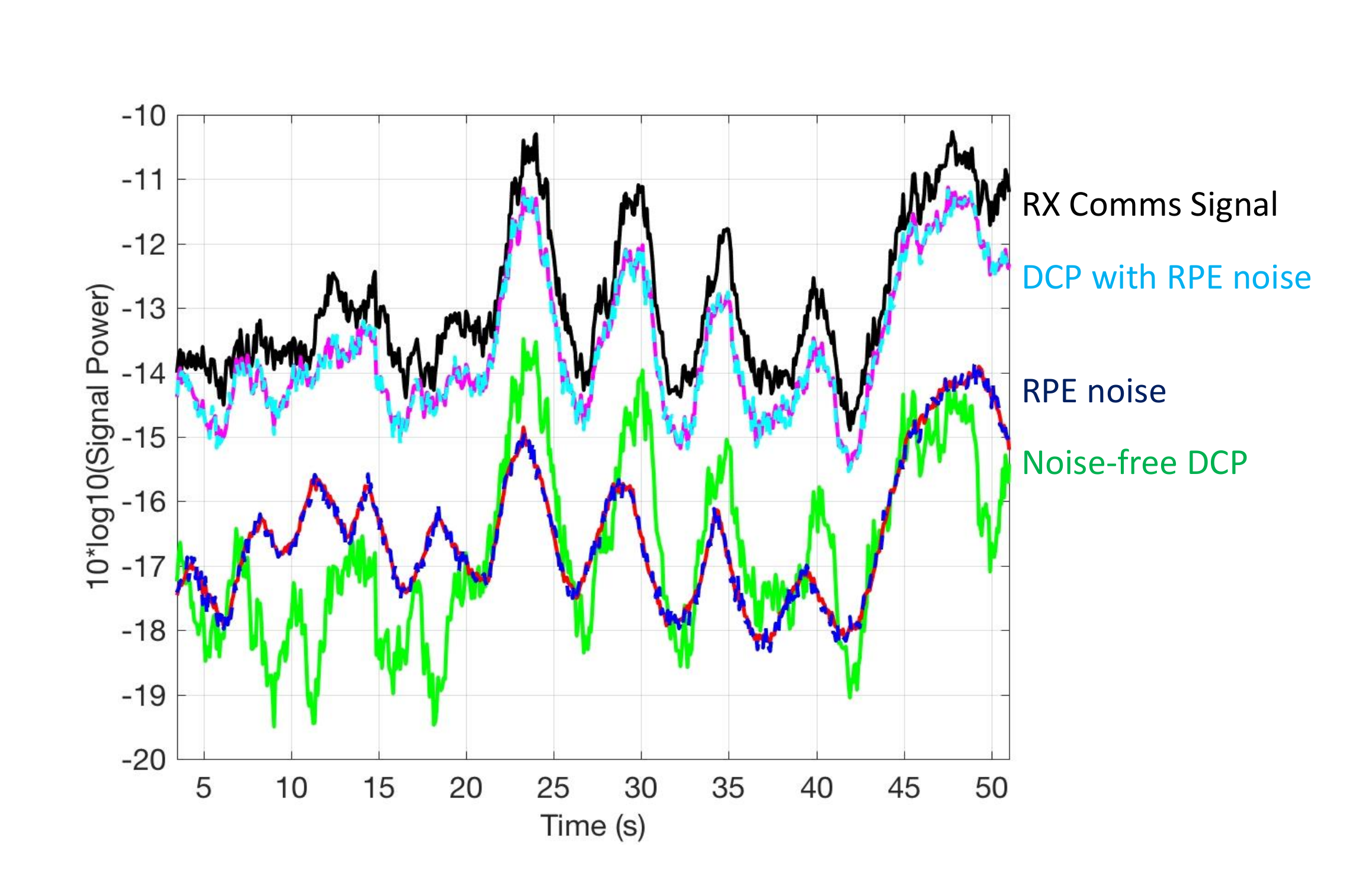}
	\caption{Output signal power versus time for (top) the received communication signal, (second) direct channel playback with RPE noise added, (middle) RPE noise vs. time, and (bottom) noise-fre direct channel playback.\label{fig:DCPRPEpower}}
\end{figure}

Figure ~\ref{fig:EQPerfRPE} illustrates  the soft-decision SNR performance (taken at the slicer output) in dB of an equalizer operating on the signals whose power were shown in Fig.~\ref{fig:DCPRPEpower}, for two representative synchronization sequences. the first (left) illustrates the gap between the performance predicted by DCP alone and that achieved in the field. Note that the DCP output over-predicts performance in terms of SNR and that this over-prediction is worse at higher SNR, where RPE is on the order of the signal received signal power. The second example (right) shows performance fading in and out and it is notable that the performance of the system tracks its estimated performance well for the DCP with RPE both during periods of strong and poor performance, while that of DCP alone only accurately predicts performance when it is poor, during signal fades.
\begin{figure}[h]
	\centering
	\includegraphics[scale=0.55]{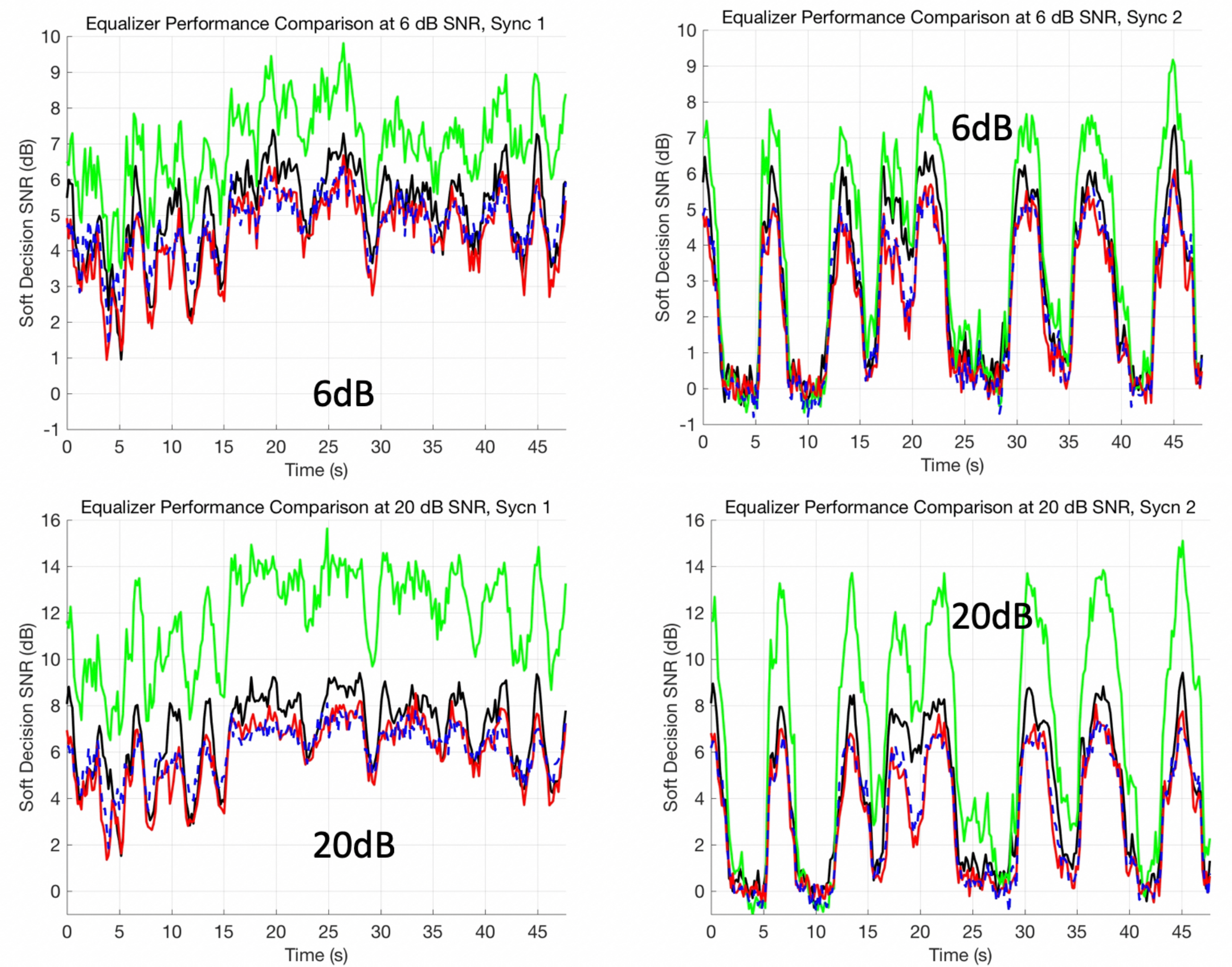}
	\caption{Shown are the soft-decision SNR performance for equalizer outputs of the received transmission, and that of DCP and DCP with RPE for two different transmitted signals at each of two SNRs.
\label{fig:EQPerfRPE}}
\end{figure}

An important element of this approach is the ability to augment channel replay simulation with an additional noise term based on residual prediction error. The goal of this approach is to provide more accurate performance predictions from channel replay simulation than would be achieved through simple time-varying convolution with an estimated channel response, which would provide an overly optimistic estimate of performance, since it takes as ``ground truth'' the estimated channel, ignoring the unmodeled dynamics. If the ambient environmental noise is non-Gaussian, then the residual prediction error will have channel-estimation induced components in addition to non-Gaussian components, at modest SNR. As such, simply estimating second-order statistics of the RPE will no longer be sufficient for determining the error power due to the transmitted signal convolved with the channel estimation error.  While at high SNR, RPE will still be dominated by channel estimation error induced Gaussian components, at modest SNRs, the non-Gaussian components of RPE need to be separated from the Gaussian components and accounted for appropriately, together with the non-Gaussian environmental noise components that are added to achieve a given SNR.
While the purpose of channel replay simulators is to develop performance predictions for new algorithms or signaling strategies, estimates of performance that make use of second-order statistics of the signal and noise (SNR or MSE) based on linear processing and Gaussian statistics no longer hold. For non-Gaussian environmental and RPE noise terms, it will be more difficult to predict the impact on performance from this excess unmodeled component using synthetic additive noise terms, and a simple additive RPE term may not be sufficient. Such cases will need to make use of ambient noise recordings, or synthetic noise sources that better fit the true environmental statistics.

\section{Environmental Models}

\begin{figure}[h]
	\centering
	\includegraphics[scale=0.5]{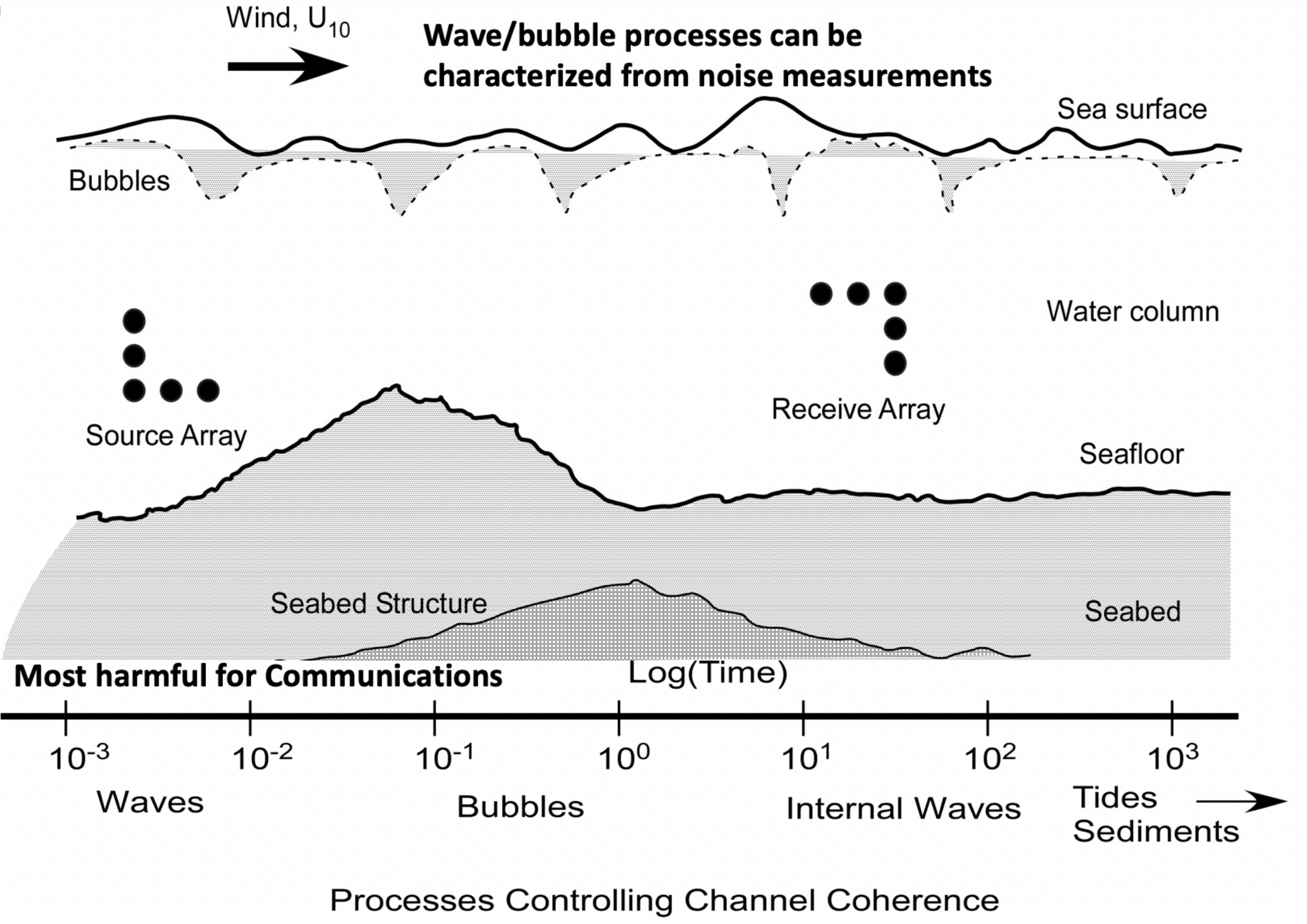}
	\caption{Channel and noise coherence drivers in underwater acoustic communications.\label{fig:CohereneDrivers}}
\end{figure}
As illustrated in Figure \ref{fig:CohereneDrivers}, there are a variety of factors that drive the coherence of channel and noise statistics in the underwater acoustic environment and these span timescales over several orders of magnitude. Often, these environmental parameters induce non-white and non-Gaussian noise components. For example, in \cite{traverso2012simulation} it was reported that the additive non-Gaussian ambient noise is induced by both strong winds and shipping movements, and can be modeled parametrically by these environmental factors. 
In channel replay simulation with non-Gaussian ambient noise, realistic performance prediction and simulation can only be achieved if environmental factors governing the non-Gaussian statics can be estimated accurately.

In this context, we now discuss methods to estimate environmental parameters, such as wind speed and seafloor type, that determine channel propagation regimes.  The surface creates significant dynamic variability in the ocean acoustic communications channel.  Gravity waves, driven by local ocean winds or distant storms, scatter and focus surface-reflected energy and introduce second-order terms into intensity and Doppler statistics \cite{preisig2004surface}.  Moreover, bubbles are episodically and randomly entrained by breaking waves, which both radiate acoustic noise and screen the surface acoustically through sound absorption \cite{deane2013suspension}.

The entrainment of bubbles near the sea surface creates channel regimes, the properties of which depend on wind speed.  Few bubbles are entrained at wind speeds in the range 0-3 m/s, and surface-reflected energy contains high-energy, Doppler-shifter arrivals \cite{preisig2004surface}.  Wind speeds higher than 3 m/s drive bubble entrainment through wave breaking, leading to increased scattering and absorption and less surface-scattered energy.  A second regime shift occurs at wind speeds around 13 m/s, which generates enough turbulence in the near-surface boundary layer to suspend bubbles across a broad range of sizes, resulting in a transition from mid frequency to low frequency surface screening \cite{deane2013suspension}.

The direct link between wind and bubble entrainment can be exploited to estimate wind speed from ambient noise, in the absence of shipping, rain and biological sounds (see \cite{vagle1990evaluation} for methods to detect noise segments that are free of contaminating sources).  This technique, called Weather Observations through Ambient Noise or WOTAN, is described in detail in \cite{vagle1990evaluation} and provides estimates of wind speed accurate to $O(1)$ ${\text{ms}^{-1}}$ for wind speeds less than 16 ${\text{ms}^{-1}}$.  One advantage of the WOTAN algorithm is that it can be implemented with ambient noise observed on a single hydrophone between signal transmissions and thus exploits hardware which must be present for a functional communications system. The wind speed at 10 m height, $\text{U}_{10}$, is calculated as follows. Let SSL be the measured sound spectral level averaged over an hour in dB re 1$\mu \text{Pa}^2$/Hz at the frequency f kHz, Q = -19 dB/decade is the noise spectrum slope and $\beta$ is a frequency-dependent factor accounting for attenuation and refractions that lies in the nominal range 0-1 dB for hydrophone depths less than 150 m and frequencies below 25 kHz. Then $\text{U}_{10}=(p_0+80.94)/52.8$, where $p_0 = 10^{SSL_0/20}$ and $SSL_0 = SSL(f)+Q\log(f_0/f)+\beta$ where $f_0 = 8$ kHz is the reference frequency.

If a vertical hydrophone array is available, information about the geophysical properties of the seafloor can also be obtained from the ambient noise field.  This technique, known as passive fathometer processing, is described in \cite{siderius2010adaptive} and \cite{harrison2002geoacoustic}. Seafloor reflectivity can be recovered from a few minutes of noise data \cite{harrison2004sub}. Given the wealth of information available in the ambient noise field, recording segments of noise data along with acoustic transmissions to provide information about channel state is strongly encouraged. Post-processing techniques \cite{vagle1990evaluation} may need to be applied to test for the presence of noise sources other than breaking waves.

\section{Conclusions}

This paper discusses useful steps that can be taken and best practices that have proven useful to maximize the value of field experiments.  These include preparation of signals for transmission and their collection such that research trades for different modulation and coding schemes may be undertaken post-experiment, without the need for retransmission of additional waveforms. Coding strategies were explored through adding a whitening sequence and an XOR in the bitpath and can be applied to any field data. Modulation strategies were explored by inserting dither sequences in the signal path to map signals onto existing field data. End-to-end channel linearity and effective ISI mitigation through channel equalization enabled system performance prediction, even without channel estimation. Additionally, sufficiently meaningful channel models and noise statistics estimates can enable post-experimental replay of the environment that is more predictive of actual field performance. For replay channel simulation, adding residual prediction errors to the simulation output enabled more accurate performance prediction than conventional replay methods. Finally, collecting sufficient environmental statistics enables meaningful performance comparisons with other field experiments. Collecting experimental data is an expensive endeavor yet it remains fundamental for exploring the limits of the underwater acoustic systems. The data reuse techniques suggested here can be broadly used to make more effective use of collected field data, reducing unnecessary experiments that might attempt to exhaustively explore all communication parameters of interest, while enabling experiment resources to be used judiciously in areas and directions of greatest impact.

\appendices
\section{Derivation of $\mu_{1,k}$, $\mu_{2,k}$, $\sigma_{1,k}$, and $\sigma_{2,k}$}
First, $\mu_{2,k}$ can be derived directly from \eqref{eq:fohatcondition} as such, 
 
\begin{align} 
\begin{split}
    E(\hat{f}_o[n-n_0]|{f[n-n_0]=k}) &= q_f^{*}\boldsymbol{H}_{n_0+1} k + E(q_f^{*} \boldsymbol{H}^{n-n_0} f^{n-n_0}-q_b^{*} f_{n-n_0-1}^{n-n_0-L_b} + q_f^{*}w_{n}^{n-L_f+1}) \\
    &= q_f^{*}\boldsymbol{H}_{n_0+1}k.
\end{split}
\end{align}
  
The second equality comes from the i.i.d. mean zero assumption on $f$ and $w$. Also, $\mu_{1,k}$ can be directly derived from \eqref{eq:fohatcondition} and (23). $\sigma_{2,k}^2$ can be computed as follows:
 
\begin{align} 
\begin{split}
        \sigma_{2,k}^2 &= {\rm Var}(\hat{f}_o[n]-f[n]|f[n]=k) \\
        &= E(|\hat{f}_o[n]-k|^2)- |\mu_{2,k}-k|^2 \\
        &= E(|f[n]|^2) \bigg(q_f^{*}\big(\boldsymbol{H}\boldsymbol{H}+E(|w[n]|^2)/E(|f[n]|^2)\big)^*q_f+q_b^{*}q_b+1\bigg) \\
        &\,\, -2 \Re\{(q_f^{*}\boldsymbol{H}_1^{L_b-n_0-1}{q_b}_{n_0+2}^{L_b}-q_f^{*}\boldsymbol{H}_{n-n_0})\}.
        \end{split}
\end{align}
  
Finally, $\sigma_{1,k}^2$ can be computed similarly from \eqref{eq:10} and \eqref{eq:12} as
 
    \begin{align} 
    \begin{split}
        \sigma_{1,k}^2 &= {\rm Var}(\hat{f}[n]-f[n]|f[n]=k) \\
        &= E(|\hat{f}[n]-k|^2)- |\mu_{1,k}-k|^2 \\
        &= E(|q_f^{*} \boldsymbol{H} g_{n}^{n-L_f-L+2}-q_b^{*} g_{n-n_0-1}^{n-n_0-L_b} + q_f^{*}w_{n}^{n-L_f+1}-g[n-n_0]|^2)-|q_f^* \boldsymbol{H}_{n_0+1}E_{g,k}-g[n-n_0]+m[n-n_0]|^2 \\
        &= E(|g[n]|^2) \bigg(q_f^{*}\big(\boldsymbol{H}\boldsymbol{H}+E(|w[n]|^2)/E(|g[n]|^2)\big)^*q_f+q_b^{*}q_b+1\bigg) -|(q_f^* \boldsymbol{H}_{n_0+1}-1)E_{g,k}|^2 \\
        & \,\, -2\Re\{((q_f^* \boldsymbol{H}_{n_0+1}-1)E_{g,k}E_{m,k}^*)\}-|E_{m,k}|^2.
    \end{split} 
    \end{align}
  
\section{Experimental Setup for MACE data}

\begin{figure}[h]
	\centering
	\includegraphics[scale=0.5]{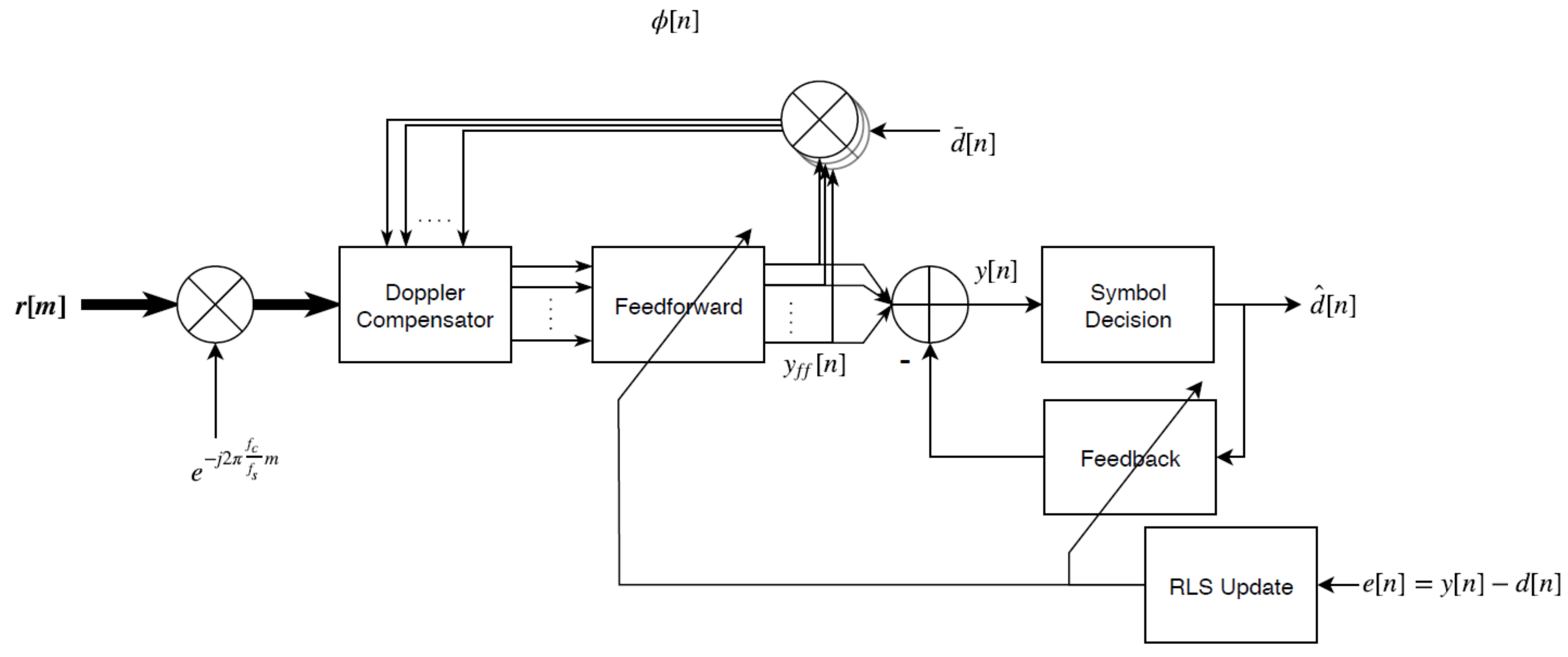}
	\caption{A schematic of the receiver structure used in processing MACE data\label{fig:fig_appendix}}
\end{figure}

The receiver processes each data sequence with a Doppler compensator and a recursive least squares decision feedback equalizer (RLS-DFE). The system design is illustrated in Figure \ref{fig:fig_appendix}. First, 200 symbols are used as a preamble sequence to synchronize the receiver and estimate the initial Doppler scaling. After synchronization, the Doppler for each channel is tracked recursively by employing a phase locked loop (PLL) on the corresponding channel feed forward output. Then, Doppler effects are compensated by resampling and phase shifting the baseband signal. The output of the Doppler compensator is then decoded with a fractionally-spaced RLS-DFE. The reference seqeunce $d[n]$ in the computation of the error sequence $e[n]$ and phase $\phi[n]$ is given during the training period. After training, no further training symbols are used and the equalizer operates in decision-directed mode. 

\section*{Acknowledgment}
\ifCLASSOPTIONcaptionsoff
  \newpage
\fi



\bibliographystyle{IEEEtran}
\bibliography{IEEEabrv,thisisrefs}
%

%




\end{document}